\begin{document}

\newcommand{\nc}{\newcommand}
\newcommand{\delete}[1]{}

\nc{\mlabel}[1]{\label{#1}}  
\nc{\mcite}[1]{\cite{#1}}  
\nc{\mref}[1]{\ref{#1}}  
\nc{\mbibitem}[1]{\bibitem{#1}} 

\delete{
\nc{\mlabel}[1]{\label{#1}  
{\hfill \hspace{1cm}{\bf{{\ }\hfill(#1)}}}}
\nc{\mcite}[1]{\cite{#1}{{\bf{{\ }(#1)}}}}  
\nc{\mref}[1]{\ref{#1}{{\bf{{\ }(#1)}}}}  
\nc{\mbibitem}[1]{\bibitem[\bf #1]{#1}} 
}

\newtheorem{theorem}{Theorem}[section]
\newtheorem{thm}[theorem]{Theorem}
\newtheorem{prop}[theorem]{Proposition}
\newtheorem{lemma}[theorem]{Lemma}
\newtheorem{coro}[theorem]{Corollary}
\newtheorem{cor}[theorem]{Corollary}
\newtheorem{prop-def}{Proposition-Definition}[section]
\newtheorem{claim}{Claim}[section]
\newtheorem{propprop}{Proposed Proposition}[section]
\newtheorem{conjecture}[theorem]{Conjecture}
\newtheorem{assumption}{Assumption}
\newtheorem{condition}[theorem]{Assumption}
\newtheorem{question}[theorem]{Question}
\theoremstyle{definition}
\newtheorem{defn}[theorem]{Definition}
\newtheorem{exam}[theorem]{Example}
\newtheorem{remark}[theorem]{Remark}
\newtheorem{ex}[theorem]{Example}
\newtheorem{coex}[theorem]{Counterexample}

\newtheorem{conv}[theorem]{Convention}

\renewcommand{\labelenumi}{{\rm(\alph{enumi})}}
\renewcommand{\theenumi}{\alph{enumi}}
\renewcommand{\labelenumii}{{\rm(\roman{enumii})}}
\renewcommand{\theenumii}{\roman{enumii}}

\nc{\Ima}{\operatorname{Im}}            
\nc{\Dom}{\operatorname{Dom}}      
\nc{\Diff}{\operatorname{Diff}}           
\nc{\End}{\operatorname{End}}        
\nc{\Id}{\operatorname{Id}}                 
\nc{\Isom}{\operatorname{Isom}}      
\nc{\Ker}{\operatorname{Ker}}           
\nc{\Lin}{\operatorname{Lin}}             
\nc{\Res}{\operatorname{Res}}         
\nc{\spec}{\operatorname{sp}}           
\nc{\supp}{\operatorname{supp}}      
\nc{\Tr}{\operatorname{Tr}}                 
\nc{\Vol}{\operatorname{Vol}}            
\nc{\sign}{\operatorname{sign}}         
\nc{\id}{\operatorname{id}}
\nc{\lin}{\operatorname{lin}}
\nc{\I}{J}
\nc{\cala}{\mathcal{A}}

\nc{\ot}{\otimes}
\nc{\bfk}{\mathbf{k}}
\nc{\wvec}[2]{{\scriptsize{\big [ \!\!
    \begin{array}{c} #1 \\ #2 \end{array} \!\! \big ]}}}

\nc{\zb}[1]{\textcolor{blue}{ #1}}
\nc{\li}[1]{\textcolor{red}{ #1}}
\nc{\sy}[1]{\textcolor{purple}{  #1}}

\newcommand{\Z}{\mathbb{Z}}
\newcommand{\ZZ}{\mathbb{Z}}
\newcommand{\Q}{\mathbb{Q}}               
\newcommand{\QQ}{\mathbb{Q}}               
\newcommand{\R}{\mathbb{R}}               
\newcommand{\RR}{\mathbb{R}}               
\newcommand{\coof}{L}           
\newcommand{\coef}{\QQ}
\newcommand{\p}{\partial}         
\nc{\cone}[1]{\langle #1\rangle}
\nc{\cl}{c}                 
\nc{\op}{o}                 
\nc{\ccone}[1]{\langle #1\rangle^\cl}
\nc{\ocone}[1]{\langle #1\rangle^\op}
\nc{\cc}{\mathfrak{C}}      
\nc{\dcc}{\mathfrak{DC}}    
\nc{\cch}{\mathfrak{Ch}}    
\nc{\dch}{\mathfrak{DCh}}   
\nc{\oc}{\mathcal{C}^o}      
\nc{\dsmc}{\dcc }  
\nc{\csup}{{^\ast}}
\nc{\bs}{\check{S}\,}
\nc{\ci}{{C,0}}
\nc{\cii}{{C,1}}
\nc{\ciii}{{C,\geq 2}}
\nc{\civ}{{IS}}
\nc{\cv}{{N}}

 \nc {\linf}{{\rm lin} (F)^\perp}
\nc{\dirlim}{\displaystyle{\lim_{\longrightarrow}}\,}
\nc{\coalg}{\mathbf{C}}
\nc{\barot}{{\otimes}}

\newcommand{\one}{\mbox{$1 \hspace{-1.0mm} {\bf l}$}}
\newcommand{\A}{\mathcal{A}}              
\newcommand{\Abb}{\mathbb{A}}          
\renewcommand{\a}{\alpha}                    
\renewcommand{\b}{\beta}                       

\newcommand{\B}{\mathcal{B}}              
\newcommand{\C}{\mathbb{C}}
 \newcommand{\calm}{{\mathcal M}}

\newcommand{\CC}{\mathcal{C}}           
\newcommand{\CR}{\mathcal{R}}           
\newcommand{\D}{\mathbb{D}}               
\newcommand{\del}{\partial}                    
\newcommand{\DD}{\mathcal{D}}           
\newcommand{\Dslash}{{D\mkern-11.5mu/\,}} 
\newcommand{\e}{\varepsilon}            
\newcommand{\F}{\mathcal{F}}                
\newcommand{\Ga}{\Gamma}                  
\newcommand{\ga}{\gamma}                   
\renewcommand{\H}{\mathcal{H}}           
\newcommand{\half}{{\mathchoice{\thalf}{\thalf}{\shalf}{\shalf}}}
\newcommand{\hideqed}{\renewcommand{\qed}{}} 
\newcommand{\K}{\mathcal{K}}             
\renewcommand{\L}{\mathcal{L}}          
\newcommand{\la}{\lambda}                   
\newcommand{\<}{\langle}
\renewcommand{\>}{\rangle}
\newcommand{\M}{\mathcal{M}}            
\newcommand{\Mop}{\star}                     
\newcommand{\N}{\mathbb{N}}             
\newcommand{\norm}[1]{\left\lVert#1\right\rVert}    
\newcommand{\norminf}[1]{\left\lVert#1\right\rVert_\infty} 
\newcommand{\om}{\omega}                 
\newcommand{\Om}{\Omega}                
\newcommand{\ol}{\\widetilde}                  
\newcommand{\OO}{\mathcal{O}}          
\newcommand{\ovc}[1]{\overset{\circ}{#1}}
\newcommand{\ox}{\otimes}                    
\newcommand{\pa}{\partial}
\newcommand{\piso}[1]{\lfloor#1\rfloor} 

\newcommand{\rad}{{\mathbf r}}
\newcommand{\sepword}[1]{\quad\mbox{#1}\quad} 
\newcommand{\set}[1]{\{\,#1\,\}}               
\newcommand{\shalf}{{\scriptstyle\frac{1}{2}}} 
\newcommand{\slim}{\mathop{\mathrm{s\mbox{-}lim}}} 
\renewcommand{\SS}{\mathcal{S}}        
\newcommand{\Sp}{{\rm Sp}}
\newcommand{\sg}{\sigma}                              
\newcommand{\T}{\mathbb{T}}                
\newcommand{\tG}{\widetilde{G}}           
\newcommand{\thalf}{\tfrac{1}{2}}            
\newcommand{\Th}{\Theta}
\renewcommand{\th}{\theta}
\newcommand{\tri}{\Delta}                        
\newcommand{\Trw}{\Tr_\omega}           
\newcommand{\UU}{\mathcal{U}}              
\newcommand{\Afr}{\mathfrak{A}}           
\newcommand{\vf}{\varphi}                       
\newcommand{\x}{\times}                          
\newcommand{\wh}{\widehat}                  
\newcommand{\wt}{\widetilde}                 
\newcommand{\ul}[1]{\underline{#1}}             
\renewcommand{\.}{\cdot}                          
\renewcommand{\:}{\colon}                       
\newcommand{\comment}[1]{\textsf{#1}}

\nc{\calc}{\mathcal{C}}
\nc{\calf}{\mathcal{F}(C\sim \cup _{i=1}^nC_i)}
\nc {\cals}{\mathcal {S}}
\nc{\calh}{\mathcal{H}}
\nc{\deff}{K}
\nc{\cali}{\mathcal{I}}
\nc{\calp}{\mathcal{P}}
\nc{\calq}{\mathcal{Q}}
\nc{\calt}{\mathcal{T}}
\nc{\vep}{\varepsilon}
\nc {\ltcone}{lattice cone\xspace}
\nc{\ltcones}{lattice cones\xspace}
\nc{\abf}{Algebraic Birkhoff Factorisation\xspace}
\nc{\abfs}{Algebraic Birkhoff Factorisations\xspace}
\nc {\lC}{(C, \Lambda _C)}
\nc {\rdim}{{\rm dim}}

\nc {\SecR}{second renormalisation}
\nc {\conefamilyc}{\underline{{C}}}
\nc {\conefamilyd}{\underline{{D}}}
\nc {\conefamilye}{\underline{{E}}}
\nc {\SSubP}{discrete open subdivision property\xspace}
\nc {\ISubP}{continuous subdivision property\xspace}
\nc {\ValP}{discrete closed subdivision property\xspace}

\title{Renormalised conical zeta values}

\author{Li Guo}
\address{Department of Mathematics and Computer Science,
         Rutgers University,
         Newark, NJ 07102, USA}
\email{liguo@rutgers.edu}

\author{Sylvie Paycha}
\address{Universit\"at Potsdam, Mathematik,
  Campus II - Golm, Haus 9
Karl-Liebknecht-Stra\ss e 24-25
D-14476 Potsdam, Germany}
\email{paycha@math.uni-potsdam.de}

\author{Bin Zhang}
\address{School of Mathematics, Yangtze Center of Mathematics,
Sichuan University, Chengdu, 610064, P. R. China}
\email{zhangbin@scu.edu.cn}

\date{\today}

\begin{abstract} Conical zeta values associated with rational convex polyhedral cones generalise
multiple zeta values. We renormalise conical zeta values  at poles by means of a generalisation of Connes and Kreimer's \abf . This paper serves as a motivation for and an application of this generalised renormalisation scheme.  The latter also yields  an Euler-Maclaurin formula on rational convex polyhedral lattice cones which relates exponential sums to exponential integrals. When restricted to Chen cones, it reduces to Connes and Kreimer's \abf for maps with values in the algebra of ordinary meromorphic functions in one variable.
\end{abstract}

\subjclass[2010]{11M32, 11H06, 52C07, 52B20, 65B15, 81T15}

\keywords{cones, coalgebras, conical zeta values, multiple zeta values, renormalisation, algebraic
Birkhoff factorisation, meromorphic functions, second renormalisation}

\maketitle
\vspace{-1.3cm}

\tableofcontents

\setcounter{section}{0}

\vspace{-1.3cm}

\allowdisplaybreaks

\section{Introduction}

Convergent conical zeta values
$$\zeta (C; \vec s):=\sum_{(n_1,\cdots,n_k)\in C\cap \Z^k}n_1^{-s_1}\cdots n_k^{-s_k},
$$
 associated with a rational convex polyhedral cone $C\subset\R^k$ and $\vec s =(s_1, \cdots, s_k)\in \ZZ ^k$, which generalise multiple zeta values, were studied in \cite{GPZ2}.    The purpose of the present paper is to study their pole structure and to evaluate them  at the poles.

 A natural idea is to apply Connes and Kreimer's \abf~\cite{CK}, see also~\cite{Ma}.  One of the  main ingredients needed for such a factorisation is  a coalgebra structure on the source space - here the space of lattice cones - of the maps to be renormalised. In~\cite {GPZ4} we showed that the space of lattice cones carries a cograded, coaugmented, connnected coalgebra structure; in the present paper, we show that this coalgebra can be enlarged to   a differential coalgebra structure (Theorem \ref{thm:HopfOnDCones}).

 Due to the geometric nature of convex cones, which is reflected in the specific coproduct built on the corresponding space of lattice cones, one cannot implement an univaluate regularisation, namely one depending on a single parameter $\e$, as Connes and Kreimer did in their \abf on Feynman graphs.  The coproduct we use  involves transverse cones built by means of an orthogonal projection, so we need a regularisation procedure which can be implemented for all cones under consideration, as well as their faces, together with the transverse cones to their faces. For a small enough family of lattice cones, such  as  the family of lattice Chen cones, their faces and the transverse lattice cones to their faces, one can  use a univaluate regularisation, in which case the regularised maps take values in Laurent series. One can then apply Connes and Kreimer's \abf  to the coalgebra of lattice cones modulo a minor adjustment due to the absence of a product on the space of such cones.  However, to deal with general convex cones and the transverse cones to their faces,   we need  (Remark \ref{rk:linearform})   a multivariate regularisation (Eq.~(\ref{eq:Somultiple})) which involves  a vector parameter $\vec \e=(\e_1,\cdots, \e_k)\in \C^k$. The regularised maps we build this way take  values in the space of multivariate meromorphic germs at zero with linear poles (Proposition-Definition \ref{defn:SI}), which we   investigated in \cite{GPZ2}.

More precisely, to renormalise conical zeta values associated   lattice cones $(C,\Lambda)$
at their poles, we implement a generalisation (Theorem
\ref{thm:Birkhoff}) of  Connes and Kreimer's \abf device \cite{CK} to the map on the coalgebra of lattice cones defined by  an exponential sum $S((C,\Lambda))$ on the lattice cones $(C,\Lambda)$. The generalisation is two fold:
\begin{itemize}
\item the  exponential sums we want to factorise  act on the colagebra of  lattice cones,  which is only equipped with a partial product, so the source space is not any longer a Hopf algebra.
\item the exponential sums  have values in the algebra of multivariate meromorphic functions, so the target space is not any longer a  Rota-Baxter algebra.
\end{itemize}

This was carried out in \cite{GPZ4}. In the present paper, we further generalise the coalgebra of cones, and consider the \abf with  additional differential structures. Indeed, in view of renormalising conical zeta values,  not only do we need to renormalise the exponential sums but also their derivatives with respect to the regularisation parameter. Hence the need for an additional differential
 structure which comes with a  decoration $\vec{s}$ leading to coloured lattice cones $(C,\Lambda,\vec s)$\footnote{Note the difference with decorated lattices cones in~\cite{GPZ2}}.

This  renormalisation procedure (Theorem \ref{th:abfd})  implemented on the exponential sums $S((C,\Lambda);\vec s)$ associated with coloured lattice cones $(C,\Lambda; \vec s)$ implies an Euler-Maclaurin formula (Eqn. \ref{eq:factSo}) on lattice cones \cite{GPZ4}  which relates exponential sums  to the corresponding  exponential integrals.
The renormalised conical zeta values $\zeta ^{\rm ren}\left((C,\Lambda),\vec s\right)$ associated with a coloured lattice cone $\left((C,\Lambda);\vec s\right)$ are derived (Eqn. (\ref{eq:renzeta}))  from the factors entering the factorisation formula of the associated exponential sum $S((C,\Lambda);\vec s)$.

On the smaller coalgebra of lattice Chen  cones,
the multivariate regularisation procedure implemented on the algebra of all convex lattice cones,  can    be reduced to a univariate regularisation procedure by specifying one direction of regularisation $\vec \e:= \vec a\, \e$ for some fixed vector $\vec a$. We show (Proposition \ref{prop:comparison}) how in the case of lattice Chen cones, specialising to  an univaluate regularisation procedure  in specifying a direction $\vec a$,  our renormalisation procedure amounts to the usual \abf on the maps given by the exponential sums on the lattice cones,  with values in Laurent series, thus independent of the choice of the direction $\vec a$. As a by-product, our geometric renormalisation procedure therefore yields renormalised multiple zeta values at negative integers obtained as renormalised conical zeta values associated with lattice Chen cones. However, these renormalised multiple zeta values do not satisfy the stuffle relations \cite{GPZ5} due to the use of the coproduct on Chen cones which involves an orthogonal complement map. Thus, the renormalised multiple zeta values we obtain here by a  geometric approach as particular instances of conical zeta values,  differ from the ones derived in \cite{MP} and \cite{GZ} by an alternative algebro-combinatorial approach. As observed in \cite{GPZ4}, the
renormalised conical values derived here by means of a multivariate \abf, can alternatively be derived directly from the derivatives of the exponential sums on  cones by means of the projection onto the holomorphic part of the meromorphic germs they give rise to.  In this respect, the
multivariate parametrisation approach-imposed here by the geometric nature of the cones- bares
over the univaluate one, the advantage that renormalisation then amounts to a projection on the
target space of multivariate meromorphic germs without the need for an \abf. So, not only is the multivariate approach necessary when dealing with the space of all cones, but it is also very useful in so far as it provides a way to circumvent the use of an \abf all together.

\section {Generalised \abf}\mlabel{sec:abd}
Let us first recall  the \abf of Connes and Kreimer's renormalisation scheme ~\mcite{CK}, which we shall then generalise in order to later renormalise conical zeta values at poles.

\begin{theorem}
Let $H$ be a commutative connected graded Hopf algebra and $(R,P)$ be a Rota-Baxter algebra of weight $-1$, $\phi: H \to R$ be an algebra homomorphism.
\begin{enumerate}
\item
There are algebra homomorphisms $\phi_-: H \to \bfk+P(R)$ and
$\phi_+: H \to \bfk+(1-P)(R)$ such that
$$\phi=\phi_-^{\ast\, (-1)}\ast \phi_+.$$
Here $\phi_-^{\ast\, (-1)}$ is the inverse of $\phi_-$ with respect to the convolution product.
\mlabel{it:decom}
\item
If $P^2=P$, then the decomposition in~(\mref{it:decom}) is unique.
\mlabel{it:uni}
\end{enumerate}
\mlabel{thm:diffBirk}
\end{theorem}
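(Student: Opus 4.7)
The proof follows the classical Bogoliubov recursion of Connes and Kreimer, exploiting the grading on $H$ to construct $\phi_\pm$ inductively. Since $H = \bigoplus_{n\geq 0}H_n$ is connected with $H_0 = \bfk$, for $x \in \ker\epsilon$ the reduced coproduct $\tilde\Delta(x) = \Delta(x) - x\otimes 1 - 1\otimes x = \sum x'\otimes x''$ involves only $x', x''$ of strictly smaller degree. Rewriting the desired identity $\phi = \phi_-^{\ast\,(-1)}\ast\phi_+$ in the form $\phi_+ = \phi_-\ast\phi$ and evaluating on $x$ yields
$$\phi_+(x) - \phi_-(x) = \bar\phi(x), \qquad \bar\phi(x) := \phi(x) + \sum \phi_-(x')\phi(x''),$$
and the range conditions $\phi_-(x) \in P(R)$, $\phi_+(x) \in (1-P)(R)$ motivate the Bogoliubov formulas
$$\phi_-(x) := -P\bigl(\bar\phi(x)\bigr), \qquad \phi_+(x) := (1-P)\bigl(\bar\phi(x)\bigr),$$
with $\phi_\pm(1) = 1$. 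I would take these as the inductive definition of $\phi_\pm$: the identity $\phi_+(x) - \phi_-(x) = \bar\phi(x)$ holds tautologically, and the convolution factorisation is built in.

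The heart of the proof, and its main obstacle, is the verification that $\phi_-$ (and hence $\phi_+ = \phi_-\ast\phi$) is an algebra homomorphism. I would proceed by induction on the total degree of a product $xy$. Assuming multiplicativity of $\phi_-$ at lower degrees, expand $\phi_-(xy) = -P\bigl(\bar\phi(xy)\bigr)$ using $\Delta(xy) = \Delta(x)\Delta(y)$ and compare with $\phi_-(x)\phi_-(y)$: the terms governed by the inductive hypothesis match (commutativity of $H$ is used to reorder factors in the target), leaving a residue of cross-terms of the form $P(a)P(b)$ which are reabsorbed via the Rota--Baxter weight $-1$ identity
$$P(a)P(b) + P(ab) = P\bigl(aP(b)\bigr) + P\bigl(P(a)b\bigr).$$
This is precisely the identity needed to pull $P$ past a product up to correction terms; the bookkeeping is delicate but by now standard. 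Multiplicativity of $\phi_+$ then follows since in the commutative target setting the convolution of two algebra homomorphisms from a commutative bialgebra is again an algebra homomorphism.

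For uniqueness in part (b), assume $P^2 = P$. Then $R = P(R) \oplus (1-P)(R)$ as $\bfk$-modules, and both summands are subalgebras (the second via the same Rota--Baxter identity applied to $\ker P$). Given another factorisation $\phi = \psi_-^{\ast\,(-1)}\ast\psi_+$ with the same range constraints, form $\chi := \psi_- \ast \phi_-^{\ast\,(-1)}$; cancelling $\phi$ in the two factorisations yields $\chi = \psi_+ \ast \phi_+^{\ast\,(-1)}$ as well. For $x \in \ker\epsilon$, the first expression places $\chi(x)$ in $P(R)$ while the second places it in $(1-P)(R)$, so the direct sum decomposition forces $\chi(x) = 0$. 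Hence $\chi$ equals the convolution unit on all of $H$, giving $\psi_\pm = \phi_\pm$.
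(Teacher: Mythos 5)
Your proof is correct, and it is essentially the classical Connes--Kreimer/Manchon argument: Bogoliubov recursion on the grading, multiplicativity of $\phi_-$ via the weight $-1$ Rota--Baxter identity, and uniqueness from the direct sum $R=P(R)\oplus(1-P)(R)$ when $P$ is idempotent. The paper, however, does not prove this theorem at all --- it is recalled from \cite{CK} and \cite{Ma} as motivation --- so the relevant comparison is with the paper's proof of its generalisation, Theorem \mref{thm:abf}. There the same recursive formulas $\varphi_1(x)=-P\big(\varphi(x)+\sum\varphi_1(x')\varphi(x'')\big)$ and $\varphi_2=(\id-P)(\cdots)$ appear, and the uniqueness argument is the same induction you give; but the source is only a connected coalgebra and the target only a decomposed algebra, so the entire multiplicativity verification --- which you correctly identify as ``the heart of the proof'' and the only place the Rota--Baxter identity is used --- simply disappears. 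In that sense your proof buys the stronger conclusion (that $\phi_\pm$ are algebra homomorphisms) at the cost of the delicate Rota--Baxter bookkeeping, while the paper's route trades that conclusion away to gain applicability to $\QQ\dcc$ and $\calm_\QQ(\C^\infty)$, which carry no Hopf or Rota--Baxter structure. One small correction to your sketch: the commutativity used to reorder factors in the inductive step, and to conclude that the convolution $\phi_+=\phi_-\ast\phi$ of two characters is again a character, is commutativity of the \emph{target} $R$, not of $H$; the statement as printed attaches ``commutative'' to $H$, but your argument (like the classical one) tacitly needs $R$ commutative, which you do acknowledge with the phrase ``commutative target setting.''
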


On the one hand, in \cite {GPZ4}, we generalised the \abf of Connes-Kreimer's renormalisation scheme for connected coalgebras without the need for either a Hopf algebra in the source or a Rota-Baxter algebra in the target. On the other hand, we provided the following  differential variant in~\mcite{GZ2}.
\begin {theorem}
\mlabel{th:diff}
If $(H,d)$ is in addition a differential Hopf algebra, $(R,P,\partial)$ is
a commutative differential Rota-Baxter algebra,
and $\phi$ is a differential algebra homomorphism,
then $\phi_-$ and $\phi_+$ are also differential
algebra homomorphisms.
\end {theorem}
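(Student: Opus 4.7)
The plan is to prove the claim by induction on the grading of $H$. By Theorem~\ref{thm:diffBirk}, the maps $\phi_-$ and $\phi_+$ are already algebra homomorphisms and satisfy the Bogoliubov-type recursion: writing the reduced coproduct as $\tilde\Delta(x)=\sum x'\ot x''$ on an element $x$ of positive degree,
\begin{align*}
\phi_-(x) &= -P\Bigl(\phi(x) + \sum \phi_-(x')\,\phi(x'')\Bigr),\\
\phi_+(x) &= (\id - P)\Bigl(\phi(x) + \sum \phi_-(x')\,\phi(x'')\Bigr).
\end{align*}
It therefore remains to verify the intertwining identities $\partial\circ\phi_\pm = \phi_\pm\circ d$.

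The argument rests on four compatibility ingredients built into the hypothesis: (i) $d$ is a coderivation, so $\Delta\circ d = (d\ot\id + \id\ot d)\circ\Delta$, which at the level of the reduced coproduct gives $\tilde\Delta(dx)=\sum(dx'\ot x''+x'\ot dx'')$; (ii) $\partial$ is a derivation of $R$; (iii) $\partial\circ P = P\circ\partial$, which is part of the definition of a differential Rota--Baxter algebra; and (iv) $\partial\circ\phi = \phi\circ d$, since $\phi$ is a differential algebra homomorphism.

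For the inductive step, suppose the identity $\partial\circ\phi_- = \phi_-\circ d$ has been established on elements of degree strictly less than $n$. For $x$ of degree $n$, apply $\partial$ to the recursion for $\phi_-(x)$: use (iii) to pull $\partial$ inside $P$, (ii) to distribute it across the products, and (iv) together with the inductive hypothesis to turn each occurrence of $\partial$ into a $d$ on the corresponding factor. This yields
$$
\partial\phi_-(x) = -P\Bigl(\phi(dx) + \sum\phi_-(dx')\,\phi(x'') + \sum\phi_-(x')\,\phi(dx'')\Bigr).
$$
On the other hand, feeding the reduced coproduct of $dx$ provided by (i) into the Bogoliubov formula for $\phi_-(dx)$ gives exactly the same right-hand side, whence $\partial\phi_-(x)=\phi_-(dx)$. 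Replacing $-P$ by $\id-P$ throughout the same manipulation settles $\phi_+$.

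The step requiring the most care is the book-keeping of the reduced coproduct after applying $d$: one must ensure that in each term of $\tilde\Delta(dx)$ the left tensor factor still lies in a degree strictly below $n$, so that the inductive hypothesis on $\phi_-$ can be invoked. In the standard setting where the derivation $d$ preserves the connected grading of $H$ (which is the relevant case for regularisation parameters treated as degree-zero derivations), this is automatic; once the grading compatibility is in place, the rest of the proof is a direct propagation of (i)--(iv) through the recursion.
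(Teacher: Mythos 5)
Your proof is correct and follows essentially the same route the paper takes (which it only sketches, deferring to \cite[Theorem~3.2]{GZ2} and, in the generalised Theorem~\ref{thm:abf}, to the observation that one first establishes $\partial P=P\partial$ and then runs the inductive argument through the Bogoliubov recursion, exactly as you do). One small clarification: the worry in your closing paragraph is unnecessary, because the induction hypothesis is only ever invoked on the left factors $x'$ of $\tilde\Delta(x)$, which automatically have degree $<n$; the terms $\phi_-(dx')$ produced by the coderivation identity need not be rewritten at all, since they appear verbatim on both sides of the comparison of $\partial\phi_-(x)$ with $\phi_-(dx)$, so no assumption that $d$ preserves the grading is required.
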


In order to explore the structure of renormalised conical zeta values, we combine these two generalisations.

\begin {defn} A {\bf differential cograded, coaugmented, connnected coalgebra}
is a cograded, coaugmented, connnected coalgebra $\left(\coalg=\bigoplus\limits_{n\geq 0} \coalg^{(n)}, \Delta,\e, u\right)$ with linear maps $\delta_\sigma:\coalg \to \coalg$ for $\sigma$ in an index set $\Sigma$ such that
\begin{equation}
\Delta\,\delta_\sigma = (\id\ot \delta_\sigma + \delta_\sigma\ot \id)\,\Delta,\quad  \delta_\sigma(\coalg^{(n)})\subseteq \coalg^{(n+1)},\quad \delta_\sigma \,\delta_\tau=\delta_\tau\, \delta_\sigma, \quad \sigma, \tau\in \Sigma.
\mlabel{eq:dcc}
\end{equation}
The linear maps $\delta_\sigma, \sigma\in \Sigma,$ are called {\bf coderivations} on $\coalg$.   \mlabel{de:dcc}
\end {defn}
It follows from the definition that $\delta_\sigma$ stablises $\ker \e$. Recall the counit property of $\e$ for $\Delta$:
\begin{equation}
 \beta_\ell = (\e \ot \id)\Delta, \quad \beta_r =(\id\ot \e)\Delta,
 \mlabel{eq:cou}
 \end{equation}
where
$$\beta_\ell: \coalg \to \bfk \ot \coalg, x\mapsto 1 \ot x, \quad \beta_r: \coalg \to \coalg \ot \bfk, x\mapsto x\ot 1, $$ with
$$\beta_\ell^{-1}: \bfk \ot \coalg \to \coalg, a\ot x\mapsto ax, \quad \beta_r^{-1}: \coalg\ot \bfk \to \coalg, x\ot a\mapsto ax.$$

\begin{lemma}
For a differential cograded, coaugmented, connnected coalgebra $ (C,\Delta,\e,u)$ with coderivations $\delta_\sigma, \sigma\in \Sigma$, we have
$\e \delta _\sigma =0$.
\mlabel{lem:coco}
\end{lemma}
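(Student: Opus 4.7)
The plan is to derive $\e\,\delta_\sigma = 0$ directly from the coderivation axiom (\ref{eq:dcc}) and the counit property (\ref{eq:cou}), avoiding any recourse to the grading, since the whole identity is already encoded in the interaction of $\delta_\sigma$ with $\Delta$ and $\e$.

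Fix $x \in \coalg$ and apply $\e \ot \e$ to both sides of
$$\Delta\,\delta_\sigma(x) = (\id \ot \delta_\sigma)\Delta(x) + (\delta_\sigma \ot \id)\Delta(x).$$
On the left, the standard identity $(\e \ot \e)\Delta = \e$ (obtained by applying the counit axiom (\ref{eq:cou}) one factor at a time and using the identification $\bfk \ot \bfk \cong \bfk$) gives $\e(\delta_\sigma(x))$. On the right, the first term is $(\e \ot \e\,\delta_\sigma)\Delta(x)$, which collapses to $\e\,\delta_\sigma(x)$ via $(\e \ot \id)\Delta(x) = 1 \ot x$. The second term $(\e\,\delta_\sigma \ot \e)\Delta(x)$ reduces to $\e\,\delta_\sigma(x)$ by the symmetric right-counit identity $(\id \ot \e)\Delta(x) = x \ot 1$. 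Putting the pieces together yields $\e\,\delta_\sigma(x) = 2\,\e\,\delta_\sigma(x)$, which forces $\e\,\delta_\sigma(x) = 0$.

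No step presents a real obstacle; the only bookkeeping to watch is the scalar identifications when contracting twice with $\e$, which must be applied consistently on both sides. As an independent sanity check, one may instead invoke the connectedness and cogradation: $\e$ vanishes on $\bigoplus_{n \geq 1} \coalg^{(n)}$, while the degree-raising condition $\delta_\sigma(\coalg^{(n)}) \subseteq \coalg^{(n+1)}$ places the entire image of $\delta_\sigma$ inside this kernel. The coderivation–counit argument above, however, uses only the differential coalgebra axioms and therefore remains valid regardless of the grading, which is why I would prefer it as the official proof.
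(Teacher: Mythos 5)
Your argument is correct and is essentially the paper's own proof: both apply $\e\ot\e$ to the identity $\Delta\,\delta_\sigma=(\id\ot\delta_\sigma+\delta_\sigma\ot\id)\Delta$ and use the counit property to reduce each side, the only cosmetic difference being that the paper cancels the matching term $(\e\delta_\sigma\ot 1)\beta_r$ against the left-hand side rather than writing the conclusion as $\e\,\delta_\sigma(x)=2\,\e\,\delta_\sigma(x)$. Your side remark via the grading is also valid (the paper implicitly uses this when noting that $\delta_\sigma$ stabilises $\ker\e$), but the counit computation is indeed the official route.
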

\begin{proof}
Apply $\e \ot \e$ to the two sides of the identity
$\Delta \delta_\sigma = (\id \ot \delta_\sigma +\delta_\sigma\ot \id)\Delta$. By the counit property in Eq.~(\mref{eq:cou}), on the left hand side we have
$$(\e \ot \e)\Delta \delta_\sigma = (\e \ot  \id)(\id \ot \e)\Delta\delta_\sigma = (\e \ot \id)\beta_r \delta_\sigma = (\e \delta_\sigma\ot \id)\beta_r.$$
Similarly on the right hand side we have
$$ (\e \ot \e)(\id \ot \delta_\sigma +\delta_\sigma\ot \id)\Delta
= (\e \ot \e\delta_\sigma)\Delta + (\e \delta_\sigma \ot \e)\Delta
= (1\ot \e \delta_\sigma)\beta_\ell + (\e\delta_\sigma \ot 1)\beta_r.$$
Thus we obtain
$(1\ot \e\delta_\sigma)\beta_\ell  =0$. Hence $\e\delta_\sigma =0$.
\end{proof}

As we shall argue later on, the renormalisation of conical zeta values  requires the following generalised version of this theorem \cite {GPZ4} and its differential variant, to  connected coalgebras  in the source space, which are not necessarily Hopf algebras  and algebras in the target space which are not necessarily  Rota-Baxter algebras.

\begin{thm}
Let $\coalg=\bigoplus_{n\geq 0} \coalg^{(n)}$ be a differential cograded, coaugmented, connnected coalgebra with coderivations $\delta_\sigma, \sigma\in \Sigma$\,. Let $A$ be a unitary differential algebra with derivations $\partial_\sigma, \sigma\in \Sigma.$ Let $A=A_1\oplus A_2$ be a linear decomposition  such that $1_A\in A_1$  and
$$\partial_\sigma (A_i)\subseteq A_i, \quad i=1,2, \quad \sigma\in \Sigma.$$
 Let $P$ be the projection of $A$ to $A_1$ along $A_2$.
Given  $\phi\in {\mathcal G}(\coalg,A)$ such that $\partial_\sigma \varphi = \varphi \delta_\sigma, \sigma\in \Sigma$,
define maps $\varphi_i\in {\mathcal G}(\coalg,A), i=1,2$,   by the following recursive formulae on $\ker \e$:
\begin{eqnarray}
\varphi_1(x)&=&-P\Big(\varphi(x)+\sum_{(x)} \varphi_1(x')\varphi(x'')\Big),\\
\varphi_2(x)&=&(\id_A-P)\Big(\varphi(x)+\sum_{(x)} \varphi_1(x')\varphi(x'')\Big).
\mlabel{eq:phiP}
\end{eqnarray}
\begin{enumerate}
\item
We have  $\varphi_i(\ker \e)\subseteq A_i$  $($hence  $\varphi_i: \coalg \to \bfk 1_A + A_i$$)$ and  $\delta_\sigma \varphi_i = \varphi_i \delta_\sigma, i=1, 2, \sigma\in \Sigma$. Moreover,
\begin{equation}
\mlabel{eq:abf} \varphi=\varphi_1^{\ast (-1)} \ast \varphi_2
\end{equation}
\mlabel{it:abf}
\item
$\varphi_1$ and $\varphi_2$ are are the unique maps in $ {\mathcal G}(\coalg,A)$  such that   $ \varphi_i(\ker \e)\subseteq A_i$ for $i=1, 2,$ and Eq.~(\mref{eq:abf}) holds.
\mlabel{it:uniq}
\item If moreover $A_1$ is a subalgebra of $A$ then $\varphi_1 ^{\ast (-1)}$ lies in $ {\mathcal G}(\coalg,A_1)$. \mlabel{it:phiinv}
\end{enumerate}
\mlabel{thm:abf}
\mlabel{thm:Birkhoff}
\end{thm}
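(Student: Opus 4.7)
The plan is to separate the coalgebraic content from the new differential content of the theorem.  The factorisation~\eqref{eq:abf}, the membership $\varphi_i\in{\mathcal G}(\coalg,A)$ with $\varphi_i(\ker\e)\subseteq A_i$, the uniqueness in~(\mref{it:uniq}), and the subalgebra statement~(\mref{it:phiinv}) are immediate applications of the generalised Algebraic Birkhoff Factorisation for cograded, coaugmented, connected coalgebras established in~\cite{GPZ4}: the recursive definition~\eqref{eq:phiP} is precisely the one used there and does not involve the coderivations $\delta_\sigma$.  The genuinely new content is therefore the intertwining $\partial_\sigma\varphi_i = \varphi_i\,\delta_\sigma$ for $i=1,2$ and $\sigma\in\Sigma$, which is what I would establish by a separate induction argument.

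Two preliminary observations set up the induction.  First, the hypothesis $\partial_\sigma(A_i)\subseteq A_i$ forces $P\partial_\sigma = \partial_\sigma P$, because applying $\partial_\sigma$ to the decomposition $a = Pa + (\id_A-P)a$ preserves both summands.  Second, $\partial_\sigma(1_A)=0$ in a differential algebra and the hypothesis $\partial_\sigma\varphi = \varphi\,\delta_\sigma$ together give $\varphi(\delta_\sigma 1_\coalg) = 0$; and since $\delta_\sigma(1_\coalg)\in\coalg^{(1)}\cap\ker\e$ (using Lemma~\ref{lem:coco} and cogradedness) has vanishing reduced coproduct, the recursion~\eqref{eq:phiP} collapses to $\varphi_i(\delta_\sigma 1_\coalg) = 0$ for $i=1,2$.

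The intertwining is then proved by induction on the cograding of $x\in\coalg$.  The base case $x\in\coalg^{(0)} = \bfk\cdot 1_\coalg$ is immediate from the preliminary observations.  For $x\in\coalg^{(n)}\cap\ker\e$ with $n\geq 1$, write the reduced coproduct $\tilde\Delta(x) = \sum_{(x)} x'\ox x''$ with $x',x''\in\ker\e$ of cograding strictly less than $n$.  Applying $\partial_\sigma$ to the recursion~\eqref{eq:phiP} for $\varphi_1(x)$, one invokes $P\partial_\sigma=\partial_\sigma P$, the Leibniz rule for $\partial_\sigma$, the hypothesis $\partial_\sigma\varphi = \varphi\,\delta_\sigma$, and the inductive hypothesis $\partial_\sigma\varphi_1(x') = \varphi_1(\delta_\sigma x')$.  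On the other hand, the coderivation identity $\Delta\,\delta_\sigma = (\id\ox\delta_\sigma + \delta_\sigma\ox\id)\Delta$ from~\eqref{eq:dcc}, combined with the vanishing of $\varphi$ and $\varphi_1$ on $\delta_\sigma(1_\coalg)$, produces for $\varphi_1(\delta_\sigma x)$ via~\eqref{eq:phiP} an expression matching $\partial_\sigma\varphi_1(x)$ term by term.  The argument for $\varphi_2$ is identical with $P$ replaced by $\id_A-P$.

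The main obstacle is the bookkeeping around the reduced coproduct of $\delta_\sigma(x)$: since $\delta_\sigma(1_\coalg)$ need not vanish, applying the coderivation identity to $\Delta(x) = x\ox 1_\coalg + 1_\coalg\ox x + \sum_{(x)} x'\ox x''$ produces the boundary terms $x\ox\delta_\sigma(1_\coalg) + \delta_\sigma(1_\coalg)\ox x$ in $\tilde\Delta(\delta_\sigma x)$ in addition to the expected contribution from $\tilde\Delta(x)$.  These potentially spurious terms are neutralised by the preliminary observation that both $\varphi$ and $\varphi_1$ vanish on $\delta_\sigma(1_\coalg)$; once this is in hand, the induction is a routine Leibniz-style calculation.
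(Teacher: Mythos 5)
Your proposal is correct and follows essentially the same route as the paper: the non-differential statements are deferred to the generalised \abf of \cite{GPZ4} (the paper invokes \cite[Theorem~3.2]{GZ2} and \cite[Proposition~II.3.1]{Ma}), and the intertwining $\partial_\sigma\varphi_i=\varphi_i\delta_\sigma$ is obtained by first checking $P\partial_\sigma=\partial_\sigma P$ on $A_1$ and $A_2$ and then inducting on the cograding, exactly as the paper indicates. Your explicit treatment of the boundary terms $x\ot\delta_\sigma(1_\coalg)+\delta_\sigma(1_\coalg)\ot x$ in $\tilde\Delta(\delta_\sigma x)$ is a worthwhile refinement the paper leaves implicit: since $\coalg$ carries no product, $\delta_\sigma(1_\coalg)$ need not vanish (and indeed does not in the application to coloured lattice cones), so the cited Hopf-algebraic induction does not transfer verbatim, and your observation that $\varphi$, $\varphi_1$, $\varphi_2$ all kill the primitive element $\delta_\sigma(1_\coalg)$ is precisely what is needed to close the induction.
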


\begin{remark}
When the coderivations $\delta_\sigma$ and derivations $\partial_\sigma, \sigma\in \Sigma$, are taken to be the zero maps, we obtain a generalisation of the \abf of Connes and Kreimer~\mcite{CK} which does not involve the differential structure, for maps from a connected coalgebra (which is not necessarily equipped with a product) to a decomposable unitary algebra (which does not necessarily decompose into a sum of two subalgebras). This also generalises the differential \abf in~\mcite{GZ2}.
\end{remark}

\begin{proof}
(\mref{it:abf})
The inclusion $\varphi_i(\ker \e)\subseteq A_i, i=1,2,$  follows from the definitions. Further
$$\varphi_2(x)= (\id_A -P)\Big(\varphi(x)+\sum_{(x)}\varphi_1(x')\varphi(x'')\Big)
= \varphi(x) +\varphi_1(x) +\sum_{(x)}\varphi_1(x')\varphi(x'')
= (\varphi_1\ast \varphi)(x).
$$
Since $\varphi_1(\I)= 1_A$, $\varphi_1$ is invertible for the  convolution product in $A$ as a result of ~\cite[Theorem~3.2]{GZ2} applied to   $\varphi_1$, from which  Eq.~(\mref{eq:abf}) then follows.

To verify $\partial_\sigma \varphi_i = \varphi_i \delta_\sigma, i=1,2, \sigma\in \Sigma,$ we first establish $P\partial_\sigma=\partial_\sigma P$ by verifying it on $A_1$ and $A_2$. We then implements the same inductive argument as in~\cite[Theorem~3.2]{GZ2}.

\smallskip

\noindent
(\mref{it:uniq}) Suppose there are $\psi_i\in {\mathcal G}\left(\coalg,A\right), i=1,2,$ with $  \psi_i(\ker \e)\subseteq A_i$ such that $\varphi=\psi_1^{\ast (-1)} \ast \psi_2 $.
We prove $\varphi_i(x)=\psi_i(x)$ for $i=1,2, x\in \coalg^{(k)}$ by induction on $k\geq 0$. These equations hold for $k=0$. Assume that the equations hold for $x\in\coalg^{(k)}$ where $k\geq 0$. For $x\in \coalg^{(k+1)}\subseteq \ker(\e)$, by $  \varphi_2=\varphi_1\ast \varphi$ and  $  \psi_2= \psi_1 \ast \varphi,$  we have
 $$\varphi_{2}(x)= \varphi_1(x)+ \varphi(x)+ \sum_{(x)}\varphi_1(x^\prime)\varphi(x^{\prime \prime})$$ and similarly for $\psi$, namely,
 $$\psi_{2}(x)= \psi_1(x)+ \varphi(x)+ \sum_{(x)}\psi_1(x^\prime)\varphi(x^{\prime \prime}),$$ where we have made use of
 $\varphi_1(\I)=\psi_1(\I)=\varphi(\I)=1_A$ . Hence by the induction hypothesis, we have
 $$
 \varphi_{2}(x)-\psi_{2}(x)=\varphi_{1}(x)-\psi_{1}(x)+\sum_{(x)}\big(\varphi_{1}(x^{\prime  })-\psi_{1} (x^{\prime  })\big) \varphi(x^{\prime \prime})=\varphi_{1}(x)-\psi_{1} (x)\in A_{1}\cap A_2=\{0\}.$$
 Thus
$\varphi_i(x)=\psi_i (x) \quad \text{for all } x\in \ker(\e), i=1,2.$
\smallskip

\noindent (\mref{it:phiinv}) If   $A_1$ is a subalgebra, then it follows from~\cite[Proposition~II.3.1]{Ma} applied to $A_1$ that
$\varphi_1$ is invertible in $A_1$.
\end{proof}

\delete{
\li{Give the full details as in the earlier version of \cite{GPZ4} that was omitted in the final version for limit of space?} \zb {Yes}

\begin{proof} The proof goes as in the non differential case. To verify $\partial_\sigma \varphi_i = \varphi_i \delta_\sigma, i=1,2, \sigma\in \Sigma,$ we first establish $P\partial_\sigma=\partial_\sigma P$ by verifying it on $A_1$ and $A_2$. We then implement the same inductive argument as in~\cite[Theorem~3.2]{GZ2}.   \sy{The compatibility of the coproduct with the differentiation, namely $\Delta_\sigma\circ\delta_\sigma=D_\sigma\circ \Delta$,  where $D_\sigma=\delta_\sigma\otimes 1+1\otimes \delta_\sigma$ then yields (\ref{eq:abf}).}
\end{proof}

\begin{remark}When the coderivations $\delta_\sigma$ and derivations $\delta_\sigma, \sigma\in \Sigma$, are taken to be the zero maps, this gives back the generalised  \abf derived in \cite {GPZ4}.

\end{remark}
}

\section{A differential coalgebraic structure on lattice cones}
We now apply the general setup in the last section to lattice cones.

\subsection{Lattice cones}
We begin with recalling the notion and basic properties of lattice cones. See~\cite{GPZ4} for details.
In a finite dimensional real vector space, a {\bf lattice} is a finitely generated subgroup which spans the whole space over $\R$. Such a pair, namely  a real vector space  equipped  with a lattice is called a {\bf lattice vector space}.
Let $V_{1}\subset V_2\subset \cdots $ be a family of finite dimensional real vector spaces, and let $\Lambda_{k}$ be a lattice in $V_k$ such that $\Lambda _k=\Lambda _{k+1}\cap V_k$.
The   vector space $V:=\bigcup_{k=1}^{\infty}V_{k}$ and the  corresponding lattice $\Lambda:=\bigcup_{k=1}^{\infty}\Lambda_{k}$ are equipped with their natural filtration. Such a pair $(V,\Lambda)$ is called a {\bf filtered lattice space}. Usually we work in $(\RR ^\infty, \ZZ ^\infty)$ with $V_k=\RR ^k$,  $\Lambda_k$ the standard lattice $\ZZ^k$, and $\{e_1, e_2, \cdots \}$ the standard basis.

For a filtered lattice space $(V, \Lambda)$, a point/vector in $\Lambda$ is called an {\bf lattice point/vector}, a rational multiple of an integer point/vector is called a {\bf rational lattice point/vector}.

For a subset $S$ of $V$, let $\lin(S)$ denote its $\R$-linear span. In this paper, we only consider subspaces of $V$ spanned by rational lattice vectors.

Let $V:=\cup_{k\geq1}V_k$ with $\Lambda=\cup_{k\geq 1}\Lambda_k$ be a filtered lattice space. An {\bf inner product} $Q(\cdot,\cdot)=(\cdot,\cdot)$ on $V$ is a sequence of inner products
$$ Q_k(\cdot,\cdot)=(\cdot,\cdot)_k: V_k\ot V_k \to \RR, \quad k\geq 1,$$
that is compatible with the inclusion $j_k:V_k\hookrightarrow V_{k+1}$ and whose restriction to $\Lambda\ot \QQ$ and hence $\Lambda$ takes values in $\QQ$. A filtered lattice space together with an inner product is called a {\bf filtered lattice Euclidean space}.

Let $L$ be a subspace of $V_k$.
Set
$$L^{\perp_k^Q}:=\left \{ v\in V_k\,|\, Q_k(v,u)  =0\text{ for all } u\in L\right\}.$$
The inner product $Q_k$ gives the direct sum decomposition $V_k=L\oplus L^{\perp_k^Q}$ and hence the orthogonal projection
\begin{equation} \pi_{k,L^\perp}^Q: V_k \to L^{\perp_k^Q}
\mlabel{eq:orthproj}
\end{equation}
along $L$ as well as an isomorphism
$$ \iota_{k,L}^Q: V_k/L \to L^{\perp_k^Q}.$$

Also, the induced isomorphism $Q_k^*:V_k\to V_k^*$ yields an embedding $V_k^*\hookrightarrow V_{k+1}^*$. We refer to the direct limit  $V^\circledast:=\bigcup_{k=0}^{\infty}V_{k}^*= \varinjlim V_{k}^*$   as the {\bf filtered dual space} of $V$. We will fix an inner product $Q(\cdot, \cdot)=(\cdot, \cdot)$ and drop the superscript $Q$ to simplify notations.

We collect basic definitions and facts on lattice cones that will be used in this paper, see~\cite {GPZ2} for a detailed discussion.

\begin{enumerate}
\item
By a {\bf cone} in $V_k$ we mean a {\bf closed convex (polyhedral) cone} in $V_k$, namely the convex set
\begin{equation}
\cone{v_1,\cdots,v_n}:=\RR\{v_1,\cdots,v_n\}=\RR _{\geq 0}v_1+\cdots+\RR_{\geq 0}v_n,
\mlabel{eq:cone}
\end{equation}
where $v_i\in V_k$, $i=1,\cdots, n$.
\item
A cone is called {\bf rational} if the $v_i$'s in Eq.~(\mref{eq:cone}) are in $\Lambda_k$. This is equivalent to requiring that the vectors are in $\Lambda_k\otimes \QQ$.
\item  A Chen cone is any smooth cone in $\RR ^\infty $   of the form $\langle e_1,e_1+e_2,\cdots, e_1+\cdots+e_k\rangle$ and is denoted  by $C_k^{\mathrm{Chen}}$. Note that the faces of a  Chen cone   $\langle e_1,e_1+e_2,\cdots, e_1+\cdots+e_k\rangle$ are of the form $\langle e_1+\cdots +e_{i_1}, e_1 +\cdots +e_{i_2},\cdots, e_1+\cdots +e_{i_l}\rangle$ for some indices $1\leq i_1<\cdots <i_l\leq k$,so they are not Chen cones.
\item A {\bf subdivision} of a cone $C$ is a set $\conefamilyc=\{C_1,\cdots,C_r\}$ of cones such that
\begin{enumerate}
\item[(i)] $C=\cup_{i=1}^r C_i$,
\item[(ii)] $C_1,\cdots,C_r$ have the same dimension as $C$, and
\item[(iii)] $C_1,\cdots,C_r$ intersect along their faces, i.e.,  $C_i\cap C_j$ is a face of both $C_i$ and $C_j$.
\end{enumerate}
We will use $\mathcal{F}^o(\conefamilyc )$ denote the set of  faces of $C_1,\cdots,C_r$ that are not contained in any proper face of $C$.
\item
A {\bf \ltcone}  in $V_k$ is a pair  $(C, \Lambda _C)$  with $C$  a cone in $V_k$ and $\Lambda _C$  a lattice in ${\rm lin}(C)$  generated by rational vectors.
\item A {\bf face} of a \ltcone $(C,\Lambda_C)$ is the \ltcone \ $(F, \Lambda _F)$ where $F$ is a face of $C$ and $\Lambda _F:=\Lambda _C\cap {\rm lin}(F)$.
\item
A {\bf primary generating set} of a {\ltcone} \ $\lC$ is a generating set $\{v_1,\cdots,v_n\}$ of $C$ such that
\begin{enumerate}
\item
$v_i\in \Lambda_C$, $i=1, \cdots , n$,
\item
there is no real number $r_i\in (0,1)$ such that $r_iv_i$ lies in $ \Lambda_C$, and
\item none of the generating vectors $v_i$ is a positive linear combination of the others.
\end{enumerate}
\item
A \ltcone  \ $(C, \Lambda _C)$  is called {\bf strongly convex} (resp. {\bf simplicial}) if $C$ is. A \ltcone  \ $(C, \Lambda _C)$ is called {\bf smooth} if the additive monoid $\Lambda_C\cap C$ has a monoid basis. In other words, there are linearly independent rational lattice vectors $v_1,\cdots,v_\ell$ such that
$\Lambda_C\cap C=\ZZ_{\geq 0}\{v_1,\cdots,v_\ell\}$.
\item
A {\bf subdivision} of a \ltcone  \ $(C, \Lambda _C)$  is a set of {\ltcone}s $\{(C_i, \Lambda _{C_i})\,|\, 1\leq i\leq r\}$ such that $\{C_i\,|\,1\leq i\leq r\}$ is a subdivision of $C$ and $\Lambda_{C_i}=\Lambda _C$ for all $1\leq i\leq r$.
\item
Let $F$ be a face of  a cone $C\subseteq V_k$. The {\bf transverse cone} $t(C,F)$ to $F$ is the projection
$\pi_{k,F^\perp}(C)$ of $C$ in $ \lin(F)^\perp\subseteq V_k$, where $\pi_{k,F^\perp}=\pi_{k,{\rm lin}(F)^\perp}$.
\item
Let $(F, \Lambda_F)$ be a face of the \ltcone $(C, \Lambda_C)$. The {\bf transverse \ltcone} $(t(C,F), \Lambda _{t(C,F)})$ along the face  $(F, \Lambda _F)$ is the projection of $(C, \Lambda_C)$ on $ \lin(F)^\perp\subseteq V_k$. More precisely, let $\pi_{F^\perp}:V_k\to \lin(F)^\perp$ be the projection, then
\begin{equation}
(t(C,F), \Lambda_{t(C,F)}):=(\pi_{F^\perp}(C), \pi_{F^\perp}(\Lambda_C)).
\mlabel{eq:tcdef}
\end{equation}
We also use the notation $t\left((C,\Lambda_C),(F,\Lambda_F)\right )$ to denote the transverse \ltcone.
\end{enumerate}

As in the case of ordinary cones, we have the following property.
\begin{prop} 
Any \ltcone \ can be subdivided into smooth {\ltcone}s.
\end{prop}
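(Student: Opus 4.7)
The plan is to reduce to the classical existence of smooth (unimodular) subdivisions of rational polyhedral cones, being careful that the lattice $\Lambda_C$ is carried unchanged through every step of the subdivision, as required by the definition of a subdivision of a lattice cone given in item (ix) of the list.

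First I would subdivide the underlying cone $C$ into simplicial cones. This is a standard fact about rational polyhedral cones: one can triangulate a rational cross‑section of $C$ and cone over the triangulation, producing a simplicial subdivision $\{C_i\}$ of $C$ with $\dim C_i = \dim C$ whose cones meet along common faces. Declaring $\Lambda_{C_i} := \Lambda_C$ for every $i$ yields, by the definition in item (ix), a subdivision of the lattice cone $(C,\Lambda_C)$ into simplicial lattice cones. Hence it suffices to prove the statement for a simplicial lattice cone.

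So fix a simplicial lattice cone $(C,\Lambda_C)$ with primary generating set $v_1,\dots,v_\ell$. These are linearly independent rational lattice vectors, and they generate a full‑rank sublattice $\ZZ v_1+\cdots+\ZZ v_\ell$ of $\Lambda_C\cap\lin(C)$; let $m(C,\Lambda_C)$ denote its index. By definition, $(C,\Lambda_C)$ is smooth precisely when $m(C,\Lambda_C)=1$, so I would induct on this positive integer. If $m(C,\Lambda_C)>1$, I pick a lattice vector
\[
w \;=\; \sum_{i=1}^{\ell}\lambda_i v_i, \qquad \lambda_i\in[0,1),
\]
lying in the half‑open fundamental parallelepiped of $v_1,\dots,v_\ell$ and different from the origin; such a $w$ exists precisely because the index exceeds $1$. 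Using $w$, I perform a star subdivision of $C$: replace $C=\langle v_1,\dots,v_\ell\rangle$ by the collection of cones
\[
C_j \;:=\; \langle v_1,\dots,v_{j-1},w,v_{j+1},\dots,v_\ell\rangle
\]
for each $j$ with $\lambda_j\neq 0$, each equipped with the lattice $\Lambda_C$. A direct check shows $\{C_j\}$ satisfies conditions (i)–(iii) of a subdivision, and that the corresponding index $m(C_j,\Lambda_C)$ equals $\lambda_j \cdot m(C,\Lambda_C)$, which is a strictly smaller positive integer than $m(C,\Lambda_C)$ since $\lambda_j\in(0,1)$.

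The main (only) obstacle is verifying that the star subdivision is genuinely a subdivision in the sense of item (iv), i.e.\ that the cones $C_j$ meet along common faces and cover $C$; this is a standard but slightly tedious combinatorial check using the linear independence of $v_1,\dots,v_\ell$ and the fact that $w$ lies in the relative interior of the face $\langle v_j : \lambda_j\neq 0\rangle$. Granting this, induction on $m(C,\Lambda_C)$ reduces every simplicial lattice cone to smooth ones, and composing with the first simplicial subdivision completes the proof. Faces of the original lattice cone $(C,\Lambda_C)$ that lie on the boundary are subdivided compatibly since the construction only introduces the new ray through $w$ in the relative interior of a specified face.
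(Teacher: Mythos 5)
Your core mechanism --- star subdivision at a nonzero lattice point $w=\sum_i\lambda_i v_i$ with $\lambda_i\in[0,1)$, which multiplies the determinant/index by $\lambda_j<1$ --- is the same one the paper uses, and that part is sound. The gap is in the induction structure. You induct on the index of one simplicial cone at a time and then propose to ``compose'' the resulting smooth subdivisions of the pieces $C_i$ of the initial triangulation. But the union of independently chosen refinements of the $C_i$ need not satisfy condition (iii) in the definition of a subdivision: when $\dim C\geq 3$, two adjacent pieces $C_i$ and $C_{i'}$ share a face $F$ of dimension at least $2$, and the star subdivision of $C_i$ at a point $w$ lying in the relative interior of $F$ (which happens exactly when some $\lambda_i$ vanish) inserts the ray $\RR_{\geq 0}w$ into $F$; unless $C_{i'}$ is star-subdivided at the very same $w$, the refined cones on the two sides of $F$ no longer intersect along common faces. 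Your closing sentence about boundary faces being ``subdivided compatibly'' does not address this. Worse, the same defect already occurs inside your induction on a single simplicial cone: after the first star subdivision the cones $C_j$ share faces containing $w$'s chosen at later stages, and applying the inductive hypothesis to each $C_j$ separately yields refinements that need not agree on those shared faces, so the collection you end up with is not a subdivision of $C$.

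The paper avoids this by never leaving the category of global subdivisions of $C$: it takes a minimal counterexample with respect to the pair $\bigl(w_{\underline{C}},\,n_{\underline{C}}\bigr)$ --- the maximal determinant over the cones of a simplicial subdivision and the number of cones attaining it --- picks a cone $D$ attaining the maximum, and star-subdivides \emph{every} cone of the current subdivision that contains the face $\langle v_1,\dots,v_k\rangle$ spanned by the generators entering $v_D$ with nonzero coefficient, simultaneously and at the same point $v_D$. This keeps the refined family a genuine subdivision of $C$ and strictly decreases the pair lexicographically, giving the contradiction. To repair your argument you would need the same move: choose $w$ once, locate the minimal face of the current global subdivision whose relative interior contains it, star-subdivide all cones containing that face at that $w$, and replace your per-cone index induction by a (lexicographic) induction on a global invariant of the whole subdivision.
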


\begin{proof} For a given \ltcone $(D, \Lambda _C)$ in  a simplicial subdivision of a \ltcone $\lC$  with its primary generating set  $\{v_1, \cdots, v_{n}\}$,  we write $v_i=\sum\limits_{j=1}^{n} a_{ij}u_j$, $a_{ij}\in \ZZ$, $i=1, \cdots , n$, where $\{u_1, \cdots , u_{n}\}$  is  a basis of $\Lambda_C$. The absolute value of the determinant $w_D=|v_1, \cdots, v_{n}|:=|\det(a_{ij})|$ lies  in $\Z _{\ge 1}$ and is independent of the choice of a basis $\{u_1,\cdots,u_n\}$ of $\Lambda_C$. Further $w_D$ is equal to one  if and only $(D, \Lambda _C)$ is smooth.

We now prove the proposition by contradiction. Suppose $(C,\Lambda_C)$ is a  lattice cones that cannot be subdivided into smooth lattice cones.  Then for any simplicial subdivision $\conefamilyc:=\{(C_i,\Lambda_C)\}$ of $(C,\Lambda_C)$, we have
$$ w_{{\conefamilyc}}:=\max \{w_{C_i}\}>1 \quad \text{ and } \quad n_{\conefamilyc}:=\max |\{i\,|, w_{C_i}=w_{\conefamilyc}\}|\geq 1.$$

Choose a simplicial subdivision ${\conefamilyc}$ of $(C,\Lambda_C)$ with $w_{\conefamilyc}$ minimal and then among those, one with $n_{\conefamilyc}$ minimal.
We will construct a subdivision of $\lC$ that refines $\conefamilyc$. Let $D=\cone{v_1, \cdots, v_{n}}$ be a cone in ${\conefamilyc}$ with $w_D=w_{\conefamilyc}$. Since $w_D>1$, the lattice cone $(D, \Lambda_C)$ is not smooth. So $\{v_1, \cdots v_n\}$ is not a lattice basis of $\Lambda_C \cap D$. Note that the set $\{v_1,\cdots,v_{n}\}\cup \left(\left (\sum\limits_{i=1}^{n} [0,1)v_i\right) \cap \Lambda_C\right)$ spans $\Lambda_C\cap D$ as a monoid.  So there is a vector $0\neq v_D=\sum\limits_{i=1}^n c_i v_i\in \Lambda_C$ with $c_i\in [0,1)$ rational.

Reordering $v_i$, we can assume that $c_i\not=0$ for $i=1, \cdots, k,$ and $c_i=0$ for $i=k+1, \cdots , n$. We now use the vector $v_D=\sum\limits_{i=1}^{k} c_i v_i$ to subdivide the cones. Let $C_i=\cone{v_1, \cdots, v_{k}, v^{i}_{k+1}, \cdots, v^{i}_{n}}$, $i=1, \cdots , s$, be all the cones arising in the subdivision $\conefamilyc$ that contain $\cone{v_1, \cdots, v_k}$ as a face, with $C_1=D$. Then the set of cones
$$\{C_i, i>s\}\cup \{C_{ij}:=\cone{v_1,\cdots, \check{v_j}^D,\cdots,v_{k},v^{i}_{k+1}, \cdots, v^{i}_{n}}\,|\, j=1,\dots, k, \ i=1, \cdots, s\},$$
where $\check{v_j}^D$ means  $v_j$ has been replaced by $v_D$, yields a new subdivision ${\conefamilyc}'$ of $C$.

For elements in ${\conefamilyc}'$, the numbers $w_{C_i},i>s$  coincide. For $i=1,\cdots,s$ and $j=1, \cdots, k$,
$$|v_1,\cdots,\check{v_j}^D,\cdots,v_k,v^{i}_{k+1}, \cdots, v^{i}_{n} | =c_j|v_1,\cdots,v_k, v^{i}_{k+1}, \cdots, v^{i}_{n}|<|v_1,\cdots,v_k, v^{i}_{k+1}, \cdots, v^{i}_{n}|=w_{C_i}.$$
So $w_{C_{ij}}<w_{{\conefamilyc}}$. Therefore either $w_{{\conefamilyc}'}<w_{\conefamilyc}$, or $w_{{\conefamilyc}'}=w_{\conefamilyc}$ and $n_{{\conefamilyc}'}<n_{{\conefamilyc}}$. This gives the desired contradiction.
\end{proof}

\begin{prop}\zb {\cite {GPZ4}} \mlabel{pp:transversecone}
Transverse cones enjoy the following properties. Let $F$ be a face of a cone $C$.
\begin{enumerate}
\item {\bf (Transitivity) } $t(C,F)=t\left( t(C,F^\prime), t(F, F^\prime)\right)$ if $F^\prime$ is a face of $F$.
    \mlabel{it:tra}
\item {\bf (Compatibility with the partial order) } We have
$ \{H\preceq t(C,F)\} = \{t(G,F)\,|\, F\preceq G\preceq C\}.$
\mlabel{it:com1}
    \item {\bf (Compatibility with the dimension filtration) }  ${\rm dim}(C)={\rm dim} (F)+{\rm dim} \left( t(C,F) \right)$ for any face $F$ of $C$.
    \mlabel{it:com2}
\end{enumerate}
To the first two properties correspond similar properties  for lattice cones.
\begin{enumerate}\setcounter{enumi}{3}
\item {\bf (Transitivity) } $t\left((C,\Lambda_C),(F,\Lambda_F)\right)=t\left( t\left( (C,\Lambda_C),(F^\prime,\Lambda_{F^\prime})\right), t\left( (F,\Lambda_F),(F^\prime,\Lambda_{F^\prime})\right)\right)$ if $(F^\prime,\Lambda_{F^\prime})$ is a face of $(F,\Lambda_F)$.
    \mlabel{it:trad}
\item {\bf (Compatibility with the partial order) } We have
$$\left\{(H,\Lambda_H)\preceq t\left((C,\Lambda_C),(F,\Lambda_F)\right)\right\} =\left\{(t((G,\Lambda_G),(F,\Lambda_F))\,|\, (F,\Lambda_F)\preceq (G,\Lambda_G)\preceq (C,\Lambda_C)\right\}.$$
 \mlabel{it:com1d}
 \end{enumerate}
\end{prop}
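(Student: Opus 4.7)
My plan is to unpack the definition $t(C,F) = \pi_{F^\perp}(C)$ and reduce each assertion to an elementary fact about orthogonal projections onto subspaces spanned by rational lattice vectors.

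For transitivity (\mref{it:tra}), suppose $F' \preceq F \preceq C$. Since $\lin(F') \subseteq \lin(F)$, one has the orthogonal decomposition
\[
\lin(F) = \lin(F') \oplus \pi_{(F')^\perp}(\lin(F)),
\]
and taking orthogonal complements in $V_k$ gives
\[
\lin(F)^\perp = \lin(F')^\perp \cap \bigl(\pi_{(F')^\perp}(\lin(F))\bigr)^\perp.
\]
Since $\lin(t(F,F')) = \pi_{(F')^\perp}(\lin(F))$, this identifies $\lin(F)^\perp$ with the orthogonal complement of $\lin(t(F,F'))$ taken inside $W := \lin(F')^\perp$. The standard fact that orthogonal projection onto a subspace factors through any intermediate orthogonal projection then gives $\pi_{F^\perp} = \pi^{W} \circ \pi_{(F')^\perp}$ on $V_k$, where $\pi^W$ denotes the orthogonal projection (in the restricted inner product on $W$) onto the complement of $\lin(t(F,F'))$ within $W$. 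Applying this to $C$ yields (\mref{it:tra}).

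For the partial-order compatibility (\mref{it:com1}), I would invoke the classical fact that $\pi_{F^\perp}$ restricts to an inclusion-preserving bijection between the interval $\{G \mid F \preceq G \preceq C\}$ and the face poset of $t(C,F)$. The kernel of $\pi_{F^\perp}|_{\lin(C)}$ is precisely $\lin(F)$, so supporting hyperplanes of a face $G \supseteq F$ contain $\lin(F)$ and descend to supporting hyperplanes of $\pi_{F^\perp}(G)$ inside $t(C,F)$; conversely every face of $t(C,F)$ pulls back to a face of $C$ containing $F$. The dimension identity (\mref{it:com2}) then follows immediately from the rank-nullity theorem applied to $\pi_{F^\perp}\colon \lin(C) \to \lin(t(C,F))$, whose kernel $\lin(C) \cap \lin(F) = \lin(F)$ has dimension $\dim(F)$.

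The lattice statements (\mref{it:trad}) and (\mref{it:com1d}) require only that these arguments respect the lattice structure. By definition (see Eq.~(\mref{eq:tcdef})), the lattice attached to $t(C,F)$ is $\pi_{F^\perp}(\Lambda_C)$, and the composed-projection identity above directly gives $\pi_{F^\perp}(\Lambda_C) = \pi^W\bigl(\pi_{(F')^\perp}(\Lambda_C)\bigr)$, which is precisely the lattice of $t\bigl(t(C,F'), t(F,F')\bigr)$; this yields (\mref{it:trad}). For (\mref{it:com1d}) the bijection of (\mref{it:com1}) sends a face $(G,\Lambda_G)$ to $(t(G,F), \pi_{F^\perp}(\Lambda_G))$. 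I expect the main obstacle to be bookkeeping rather than conceptual: carefully distinguishing projections taken with respect to different ambient spaces (notably $V_k$ versus $W$), and checking throughout that $\pi_{F^\perp}(\Lambda_C)$ is actually a lattice in $\lin(t(C,F))$ generated by rational vectors, which ultimately reduces to the rationality of the inner product on $\Lambda\otimes\QQ$ assumed in the paper.
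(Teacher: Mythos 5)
The paper itself offers no proof of this proposition---it is quoted from \cite{GPZ4} with only a citation---so there is no in-paper argument to compare yours against; judged on its own, your proof is sound and follows the natural route. The transitivity argument via the nested-projection identity $\pi_{F^\perp}=\pi^W\circ\pi_{(F')^\perp}$ with $W=\lin(F')^\perp$ is correct, as are the face-poset bijection for (b) and the rank--nullity argument for (c), whose kernel computation $\lin(C)\cap\lin(F)=\lin(F)$ is exactly right. The one place where your ``bookkeeping'' genuinely hides a statement that must be checked is part (e): the face of the lattice cone $t\left((C,\Lambda_C),(F,\Lambda_F)\right)$ supported on $t(G,F)$ carries the lattice $\pi_{F^\perp}(\Lambda_C)\cap\lin\left(t(G,F)\right)$, whereas $t\left((G,\Lambda_G),(F,\Lambda_F)\right)$ carries $\pi_{F^\perp}(\Lambda_C\cap\lin(G))$, and these must be shown equal. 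The inclusion $\subseteq$ is trivial; for the converse, if $\pi_{F^\perp}(\lambda)=\pi_{F^\perp}(g)$ with $\lambda\in\Lambda_C$ and $g\in\lin(G)$, then $\lambda-g\in\lin(F)\subseteq\lin(G)$, so $\lambda\in\Lambda_C\cap\lin(G)$---one line, but it uses $F\preceq G$ essentially and should be said. Likewise your final remark about $\pi_{F^\perp}(\Lambda_C)$ being an honest lattice deserves the explicit observation that discreteness of the image follows because $\Lambda_F=\Lambda_C\cap\lin(F)$ has full rank in the kernel $\lin(F)$ of the restricted projection; with those two sentences added the proof is complete.
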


\subsection{The coalgebra of {\ltcone}s}
\mlabel{subsec:coalg}
Let $\cc_k$ denote the set of {\ltcone}s in $V_k$, $k\geq 1$.
The natural inclusions $\cc_k\to \cc_{k+1}$ induced by the natural inclusions $V_k \to V_{k+1}$, $\Lambda _k\to \Lambda _{k+1}, \ k\geq 1,$
give rise to the direct  limit $\cc =\dirlim \cc_k= \cup_{k\geq 1} \cc_k$.

We equip the $\QQ$-linear space $\QQ \cc$ generated by $\cc$ with a coproduct by  means  of transverse lattice cones. The maps
\begin{equation}
\Delta: \Q\cc \longrightarrow \Q\cc \otimes \Q\cc, \quad
(C,\Lambda _C)\mapsto \sum_{F\preceq C} (t(C,F), \Lambda _{t(C,F)}) \ot (F, \Lambda _C\cap {\rm lin}(F)),
 \mlabel{eq:coproduct}
\end{equation}
\begin{equation}
\e: \Q\cc \longrightarrow \Q, \quad
(C,\Lambda _C)\longmapsto \left \{\begin{array}{ll} 1, & C=\{0\}, \\
0, & C\neq \{0\}, \end{array} \right .
\mlabel{eq:counit}
\end{equation}
and
 \begin{equation}
 u:\Q\to \Q\cc, \quad 1 \mapsto (\{0\},\{0\} ).
  \mlabel{eq:unit}
 \end{equation}
define a cograded, coaugmented, connnected coalgebra with the grading
\begin{equation}
\QQ\cc =\bigoplus_{n\geq 0} \QQ \cc ^{(n)},
\mlabel{eq:grading}
\end{equation}
where
$$\cc ^{(n)}:= \left \{ (C,\Lambda_C)\in \cc \,\big|\, \dim\,C=n\right\},\quad n\geq 0.$$

\begin {coro} For a given \ltcone \ $\lC$, the subspace
$$\bigoplus _{F\leq C} \QQ (F, \Lambda_F) \oplus \bigoplus _{F'\leq F\leq C} \QQ (t(F,F'), \Lambda _{t(F,F')})
$$
of $\QQ\cc$ is a subcoalgebra of $\QQ \cc$.
\mlabel {coro:SubCoAlg}
\end{coro}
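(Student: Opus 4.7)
The plan is to verify that $\Delta$ maps the subspace $W$ described in the corollary into $W\otimes W$; the counit condition is immediate because $W$ contains $(\{0\},\{0\})$ and the counit vanishes on all nonzero-dimensional generators. Since $W$ is spanned by two types of generators, namely faces $(F,\Lambda_F)$ with $F\preceq C$ and transverse cones $(t(F,F'),\Lambda_{t(F,F')})$ with $F'\preceq F\preceq C$, I would treat each type in turn using the coproduct formula~(\ref{eq:coproduct}) together with parts~(\ref{it:trad}) and~(\ref{it:com1d}) of Proposition~\ref{pp:transversecone}.

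First, for a face $(F,\Lambda_F)$ of $(C,\Lambda_C)$, the coproduct reads
$$\Delta(F,\Lambda_F) = \sum_{F''\preceq F}(t(F,F''),\Lambda_{t(F,F'')}) \ot (F'',\Lambda_{F''}).$$
Since $F''\preceq F\preceq C$, the right-hand tensor factor is a face of $C$, hence lies in $W$; the left-hand factor is of the form $t(F,F'')$ with $F''\preceq F\preceq C$, hence also lies in $W$. So $\Delta(F,\Lambda_F)\in W\ot W$.

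Second, for a transverse cone $(t(F,F'),\Lambda_{t(F,F')})$ with $F'\preceq F\preceq C$, the coproduct is a sum indexed by the faces $H$ of $t(F,F')$. By Proposition~\ref{pp:transversecone}(\ref{it:com1d}), such faces are precisely of the form $H=t(G,F')$ with $F'\preceq G\preceq F$. Applying transitivity~(Proposition~\ref{pp:transversecone}(\ref{it:trad})) to rewrite $t(t(F,F'),t(G,F'))=t(F,G)$, one obtains
$$\Delta(t(F,F'),\Lambda_{t(F,F')}) = \sum_{F'\preceq G\preceq F}(t(F,G),\Lambda_{t(F,G)}) \ot (t(G,F'),\Lambda_{t(G,F')}).$$
Both factors have the form $t(\cdot,\cdot)$ with the outer cone a face of $C$ and the inner a sub-face, so they lie in $W$.

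The only obstacle is the bookkeeping identification of faces of transverse cones with transverse cones of faces, which is precisely what Proposition~\ref{pp:transversecone}(\ref{it:com1d}) supplies, together with the transitivity relation~(\ref{it:trad}) used to rewrite the outer transverse factor. Once these two identities are applied, closure of $\Delta$ on $W$ is essentially a relabelling, and the corollary follows.
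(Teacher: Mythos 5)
Your argument is correct and is exactly the intended one: the paper states this corollary without proof, and it is meant to follow from the coproduct formula~(\ref{eq:coproduct}) together with the transitivity and partial-order compatibility of transverse lattice cones in Proposition~\ref{pp:transversecone}(\ref{it:trad}),(\ref{it:com1d}), which is precisely how you close $\Delta$ on the two families of generators.
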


Now we work in $(\RR ^\infty, \ZZ ^\infty)$ with $V_k=\RR ^k$,  $\Lambda_k$ the standard lattice $\ZZ^k$, and $\{e_1, e_2, \cdots \}$ the standard basis. Let $\ZZ_{\le 0} ^\infty=\dirlim \ZZ _{\le 0}^k$. For any element $\vec s=(s_i)\in \ZZ_{\le 0} ^\infty$, we set $|\vec s |:=\sum |s_i|$.

On the space  $\QQ \dcc$ freely generated by the set
$$ \dcc := \cc \times \ZZ _{\le 0}^\infty $$
of {\bf coloured lattice cones}, there is a family of linear operators
\begin{equation}
\delta_{i}: \QQ \dcc  \to \QQ \dcc  \quad ((C,\Lambda _C); \vec s)\mapsto ((C,\Lambda _C); \vec s-e_i).
\mlabel{eq:delta}
\end{equation}
By an inductive argument on $|\vec{s}|$, we obtain
\begin {lemma} For $(C,\Lambda_C)\in \cc, k\geq 1$ and $\vec{s}\in \ZZ _{\leq 0}^ {k}$, we have
$$((C,\Lambda _C); \vec s)=\delta _{1}^{-s_{1}}\cdots \delta_{k}^{-s_{k}}((C,\Lambda _C);\vec 0).$$
\mlabel{lem:diffc}
\end{lemma}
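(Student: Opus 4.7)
The plan is a straightforward induction on $|\vec{s}| = \sum_{i}|s_{i}|$, using only the definition \eqref{eq:delta} of the $\delta_{i}$ and the obvious fact that distinct $\delta_{i}$'s commute (since each merely subtracts a coordinate basis vector from the colour tuple $\vec{s}$, and coordinate subtractions commute). The commutativity ensures that the iterated composition $\delta_{1}^{-s_{1}}\cdots\delta_{k}^{-s_{k}}$ is unambiguous, and in particular can be reordered freely, which is the only fact we will need beyond the base case.

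For the base case $|\vec{s}|=0$, we have $\vec{s}=\vec{0}$ and all exponents $-s_{i}$ vanish, so the composite operator is the identity and the identity $((C,\Lambda_{C});\vec{0}) = ((C,\Lambda_{C});\vec{0})$ is trivial. For the inductive step, suppose the formula holds whenever the $\ell^{1}$-norm of the colour is at most $n$, and let $\vec{s}\in\ZZ_{\leq 0}^{k}$ with $|\vec{s}| = n+1$. Pick an index $i$ with $s_{i}<0$ and set $\vec{s}\,' := \vec{s}+e_{i}\in\ZZ_{\leq 0}^{k}$, so that $|\vec{s}\,'| = n$. By the induction hypothesis,
\[
((C,\Lambda_{C});\vec{s}\,') = \delta_{1}^{-s_{1}'}\cdots \delta_{i}^{-s_{i}'}\cdots \delta_{k}^{-s_{k}'}((C,\Lambda_{C});\vec{0}),
\]
where $-s_{j}' = -s_{j}$ for $j\neq i$ and $-s_{i}' = -s_{i}-1$. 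Applying $\delta_{i}$ to both sides and using the definition \eqref{eq:delta}, the left-hand side becomes $((C,\Lambda_{C});\vec{s}\,' - e_{i}) = ((C,\Lambda_{C});\vec{s})$, while on the right-hand side $\delta_{i}$ can be commuted past $\delta_{j}^{-s_{j}}$ for $j\neq i$ and absorbed into the power $\delta_{i}^{-s_{i}'+1} = \delta_{i}^{-s_{i}}$, yielding exactly $\delta_{1}^{-s_{1}}\cdots\delta_{k}^{-s_{k}}((C,\Lambda_{C});\vec{0})$, as desired.

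There is no genuine obstacle here: the only subtlety worth flagging is the commutativity of the $\delta_{i}$, which is immediate from the additive action on the colour coordinate and legitimises writing the product of powers without specifying an order. Nothing about the coalgebra or coderivation structure enters; the statement is purely about how the colour decoration $\vec{s}$ is built up from $\vec{0}$ by unit decrements in each coordinate direction.
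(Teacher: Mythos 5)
Your proof is correct and follows exactly the route the paper indicates: the paper states only that the lemma is obtained ``by an inductive argument on $|\vec{s}\,|$'', and your induction on $|\vec{s}\,|$, using the definition of $\delta_i$ in Eq.~(\mref{eq:delta}) and the evident commutativity of the $\delta_i$, is precisely that argument spelled out. Nothing further is needed.
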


We next extend the coproduct $\Delta$ on $\QQ\cc$ to a coproduct on $\QQ \dcc $, still denoted by $\Delta$. We  proceed by induction on $n:=|\vec s |$. For $n=0$, we have $\vec s=\vec 0$ and define
$$\Delta\left((C,\Lambda _C);\vec 0\right)=\sum \left((C_{(1)},\Lambda_{C_{(1)}}),\vec 0\right)\ot \left((C_{(2)},\Lambda_{C_{(2)}}),\vec 0\right),
$$
using the coproduct $\Delta (C,\Lambda _C)=\sum (C_{(1)},\Lambda_{C_{(1)}})\ot (C_{(2)},\Lambda_{C_{(2)}})$ on $\QQ \cc$ define in Eq.~(\mref{eq:coproduct}).

Assume that the coproduct $\Delta$ has been defined for $(\lC;\vec s)$ with $|\vec s |=\ell$ for $\ell\geq 0$. Consider $(\lC,\vec s\,)\in \dcc$ with $\vec{s}\in \ZZ _{\leq 0}^ {k}$, $|\vec s |=\ell+1$. Then there is some $i$ such that $s_i\leq -1$ and  we define
\begin{equation}
\Delta(\lC;\vec s)= (\Delta \,\delta_i)(\lC;\vec s+e_i):= (D_i\, \Delta)(\lC;\vec s+e_i),
\mlabel{eq:codcalternative}
\end{equation}
where $D_i=\delta_i\otimes 1+1\otimes \delta_i$.
Explicitly, we have
\begin{equation}
\Delta (\lC;\vec s\,)=D _{1}^{-s_1}\cdots D _{k}^{-s_{k}}\Delta (\lC;\vec 0).
\mlabel{eq:codcNew}
\end{equation}
The counit $\e$ in Eq.~(\mref{eq:cou}) is trivially extended to a map on $\QQ\dcc$ for which we use the same notation
\begin{equation}
\e: \QQ\dcc  \to \QQ, \quad \e(\lC;\vec s\,)=\left\{\begin{array}{ll} 1, & (\lC;\vec s)=((\{0\},\{0\});\vec 0), \\ 0, & \text{otherwise}. \end{array} \right .
\mlabel{eq:dece}
\end{equation}
In particular, $\e $ vanishes on cones of positive dimension.
In view of the canonical embedding $\cc \to \dcc$, the unit $u$ defined in Eq.~(\mref{eq:cou}) can be seen as the map
\begin{equation}
u:\QQ \to \QQ \dsmc,\quad 1\mapsto ((\{0\},\{0\});0).
\mlabel{eq:decu}
\end{equation}

Denote
\begin{equation}
 \dcc ^{(n)}:= \left\{ (\lC;\vec s)\,\big|\, \dim\, C+|\vec s |=n\right\}, \quad n\geq 0.
\mlabel{eq:cog}
\end{equation}
Then by definition, we have $\dcc^{(0)}=\{((\{0\},\{0\});0)\}$ and $\delta_i(\dcc ^{(n)})\subseteq \dcc ^{(n+1)}, n\geq 0$.

\begin{theorem}
Let $\Delta, \e, u$ be as defined in Eqs.~$($\mref{eq:codcNew}$)$, $($\mref{eq:dece}$)$ and $($\mref{eq:decu}$)$. Equipped with the grading as in Eq.~$($\mref{eq:cog}$)$ and the derivations in Eq.~$($\mref{eq:delta}$)$, the quadruple $(\QQ\dcc,\Delta,\e,u)$ becomes a differential cograded, coaugmented, connnected coalgebra.
\mlabel{thm:HopfOnDCones}
\end{theorem}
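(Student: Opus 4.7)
The plan is to verify the three conditions of Definition \ref{de:dcc}, after first confirming that $(\QQ\dcc,\Delta,\e,u)$ is a cograded, coaugmented, connected coalgebra. The conditions $\delta_i(\dcc^{(n)})\subseteq\dcc^{(n+1)}$, $\delta_i\delta_j=\delta_j\delta_i$, coaugmentation, and connectedness all follow immediately from the explicit formulas: the shift $\delta_i(\lC;\vec s)=(\lC;\vec s-e_i)$ together with the grading in Eq.~(\ref{eq:cog}), and the fact that $\dcc^{(0)}=\{((\{0\},\{0\});\vec 0)\}=u(\QQ)$.

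The coderivation identity $\Delta\,\delta_i=D_i\,\Delta$ with $D_i=\delta_i\otimes\id+\id\otimes\delta_i$ is essentially built into Eq.~(\ref{eq:codcalternative}). From the multi-index form Eq.~(\ref{eq:codcNew}), and the fact that the mutually commuting $\delta_j$'s induce mutually commuting $D_j$'s (which is precisely what makes Eq.~(\ref{eq:codcNew}) well-defined in the first place), one reads off
\[\Delta\,\delta_i(\lC;\vec s)=\Delta(\lC;\vec s-e_i)=D_i\,D_1^{-s_1}\cdots D_k^{-s_k}\Delta(\lC;\vec 0)=D_i\,\Delta(\lC;\vec s).\]

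I would then establish coassociativity, counitality and the cograding by induction on $n:=|\vec s|$. The base case $\vec s=\vec 0$ reduces to the known coalgebra structure on $\QQ\cc$ proved in \cite{GPZ4}, whose coassociativity rests on the transitivity in Proposition \ref{pp:transversecone}(\ref{it:tra}) and whose cograding on the dimension identity in Proposition \ref{pp:transversecone}(\ref{it:com2}). For the step, pick any $i$ with $s_i\leq -1$ and write $\Delta(\lC;\vec s)=D_i\,\Delta(\lC;\vec s+e_i)$; expanding $D_i$ and invoking the coderivation identity just established yields
\[(\Delta\otimes\id)D_i=D_i^{(3)}(\Delta\otimes\id),\qquad (\id\otimes\Delta)D_i=D_i^{(3)}(\id\otimes\Delta),\]
with $D_i^{(3)}:=\delta_i\otimes\id\otimes\id+\id\otimes\delta_i\otimes\id+\id\otimes\id\otimes\delta_i$, so applying $D_i^{(3)}$ to the inductive hypothesis at weight $n-1$ closes the step. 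The cograding property is analogous, since $D_i$ raises the total tensor grading by one and matches the increment $n\mapsto n+1$ on the source. For counitality one first checks directly that $\e\,\delta_i=0$ (the image $(\lC;\vec s-e_i)$ can never equal $((\{0\},\{0\});\vec 0)$ since its $i$-th decoration is negative), whence $(\e\otimes\id)D_i=\delta_i(\e\otimes\id)$ after the usual identification $\QQ\otimes\QQ\dcc\cong\QQ\dcc$, and the inductive step becomes $(\e\otimes\id)\Delta(\lC;\vec s)=\delta_i\,\beta_\ell(\lC;\vec s+e_i)=\beta_\ell(\lC;\vec s)$; the right counit is symmetric.

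The main obstacle is purely organizational: setting up cleanly the intertwinings $(\Delta\otimes\id)D_i=D_i^{(3)}(\Delta\otimes\id)$, $(\id\otimes\Delta)D_i=D_i^{(3)}(\id\otimes\Delta)$ and $(\e\otimes\id)D_i=\delta_i(\e\otimes\id)$ (plus the mirror of the last), after which all four coalgebra axioms propagate from the $\vec s=\vec 0$ stratum to arbitrary weight by a single inductive template driven by the coderivation identity.
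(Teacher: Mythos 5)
Your proposal is correct and follows essentially the same route as the paper: the coderivation identity is read off from the defining relation (\ref{eq:codcalternative}), and coassociativity and counitality are propagated from the $\vec s=\vec 0$ stratum (the known coalgebra $\QQ\cc$) by induction on $|\vec s|$ using the intertwining of $D_i$ with $\Delta\otimes\id$, $\id\otimes\Delta$ and $\e\otimes\id$, the last resting on $\e\,\delta_i=0$. Your explicit direct check that $\e\,\delta_i=0$ (rather than appealing to the abstract Lemma \ref{lem:coco}) and your remark that commutativity of the $D_j$ is what makes (\ref{eq:codcNew}) well-defined are small but welcome clarifications of points the paper leaves implicit.
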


\begin{proof}
The first  equation  in Eq.~(\mref{eq:dcc})  is just Eq.~(\mref{eq:codcalternative}). The other equations follow from the definitions.

We prove the coassociativity by induction on $|\vec s\,|$  with the initial case $|\vec{s}\,|=0$ given by the coassociativity of $\Delta $ on $\QQ \cc$, where a \ltcone \ $\lC\in \cc$ is identified with $(\lC;\vec 0)$.

Suppose the coassociativity has been proved for vectors $\vec s\in \ZZ_{\le 0}^{k}$ with $|\vec s\,|=n\geq 0$ and let $\vec s\in\ZZ _{\le 0} ^{k}$ with $|\vec s\,|=n+1$. Then there is some index $i$ with $s_i\leq -1$. By the induction hypothesis, we have $(\Delta \otimes \id) \Delta  (\lC;\vec s +\vec e_i)= (\id\otimes \Delta ) \Delta  (\lC;\vec s +\vec e_i)$. It follows that
\begin{eqnarray*} (\Delta \otimes \id)\Delta (\lC;\vec s\,)  &=& (\Delta \otimes \id)D_i\Delta  (\lC;\vec s +\vec e_i)\\
&=&(\delta _i\otimes \id\otimes \id+\id\otimes \delta _i\otimes \id+\id\otimes \id \otimes \delta _i)(\Delta \otimes \id) \Delta  (\lC;\vec s +\vec e_i)\\
&=&(\delta _i\otimes \id\otimes \id+\id\otimes \delta _i\otimes \id+\id\otimes \id \otimes \delta _i)(\id\otimes \Delta  ) \Delta  (\lC;\vec s +\vec e_i)\\
&=& (\id\otimes \Delta  )D_i\Delta  (\lC;\vec s +\vec e_i)\\
&=& (\id\otimes \Delta  ) \Delta  (\lC;\vec s\, ).
\end{eqnarray*}
This proves the coassociativity.

We  also prove the counit property $(\e \ot \id)\,\Delta = \beta_\ell$ by induction on $|\vec s|$ with the initial case $|\vec s|=0$ given by the counit property on $\QQ \cc$. Suppose that the property is proved for {\ltcone}s with $|\vec s|=\ell\geq 0$. Then for $(\lC;\vec s)\in \dcc $ with $|\vec s|=\ell+1$, there is some $1\leq i\leq k$ such that $s_i\leq -1$. Then
\begin{eqnarray*}
(\e \ot \id) \Delta(C;\vec s)
&=& (\e\ot \id)(\delta_i\ot \id + \id \ot \delta_i)\Delta(C;\vec s+e_i)\\
&=&(\e\delta_i\ot \id +\e\ot \delta_i)\Delta(C;\vec s+e_i)\\
&=&(\e\ot \delta_i)\Delta(C;\vec s+e_i)\\
&=&(\id \ot \delta_i)(\e \ot \id) \Delta(C;\vec s+e_i)\\
&=&(\id \ot \delta_i)\beta_\ell (C;\vec s+e_i)\\
&=&\beta_\ell \delta_i (C; \vec s+e_i)\\
&=&\beta_\ell (C;\vec s).
\end{eqnarray*}
This completes the induction. The proof of $(\id \ot \e)\,\Delta = \beta_r$ is similar.

From the fact that $\QQ \dcc$ is cograded with the grading in Eq.~(\mref{eq:cog}), we have
$$ \QQ \dsmc = \QQ u(1) \oplus \ker \e$$
and $\QQ\dcc ^{(0)}=\{((\{0\},\{0\});(0))\}$. Hence $\QQ\dcc$ is connected.
\end{proof}

\begin {coro} Let $\cch$ be the set of lattice Chen cones, their faces and their transverse lattice cones in $(\RR ^\infty, \ZZ ^\infty)$
and $\dch=\cch\times \ZZ _{\le 0}^\infty $ , then $\QQ \cch$ and $\QQ \dch$ are sub-coalgebras of $\QQ \dcc$.
\end{coro}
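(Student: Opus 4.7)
The plan is to reduce the assertion for $\QQ\dch$ to the analogous assertion for $\QQ\cch\subseteq\QQ\cc$, and then deduce the latter from Corollary~\ref{coro:SubCoAlg} applied to each Chen cone separately. The counit and unit conditions are immediate: the zero cone $(\{0\},\{0\})$ lies in $\cch$ (as a face of any Chen cone) and $\e$ vanishes off the zero cone, so only closure under the coproduct $\Delta$ requires verification.

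For $\QQ\cch$, I would fix $k\geq 1$, let $C_0=C_k^{\rm Chen}$ be equipped with its natural lattice, and apply Corollary~\ref{coro:SubCoAlg}. This tells us that
$$H_{C_0}\;:=\;\bigoplus_{F\preceq C_0}\QQ(F,\Lambda_F)\;\oplus\;\bigoplus_{F'\preceq F\preceq C_0}\QQ\bigl(t(F,F'),\Lambda_{t(F,F')}\bigr)$$
is a sub-coalgebra of $\QQ\cc$. The generators of $H_{C_0}$ include $C_0$ itself, all its faces, and all transverse cones $t(F,F')$ for $F'\preceq F\preceq C_0$; in particular the case $F=C_0$ produces the transverse cones to faces of the Chen cone. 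Interpreting the set $\cch$ in the statement as the union of such generators over all Chen cones $C_k^{\rm Chen}$, $k\geq 1$, we get $\QQ\cch=\sum_{k\geq 1}H_{C_k^{\rm Chen}}$. Since any sum of sub-coalgebras sharing the same counit is again a sub-coalgebra (the inclusion $\Delta(H_1+H_2)\subseteq H_1\ox H_1+H_2\ox H_2\subseteq (H_1+H_2)\ox(H_1+H_2)$ is immediate), this yields that $\QQ\cch$ is a sub-coalgebra of $\QQ\cc\subseteq \QQ\dcc$.

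For $\QQ\dch=\QQ(\cch\times\ZZ_{\leq 0}^\infty)$, I would use Lemma~\ref{lem:diffc} to write an arbitrary element as $\delta_1^{-s_1}\cdots\delta_k^{-s_k}((C,\Lambda_C);\vec 0)$ with $(C,\Lambda_C)\in\cch$. The coderivation identity $\Delta\,\delta_i=D_i\,\Delta$ with $D_i=\delta_i\ox\id+\id\ox\delta_i$ then iterates to
$$\Delta\bigl((C,\Lambda_C);\vec s\bigr)=D_1^{-s_1}\cdots D_k^{-s_k}\,\Delta\bigl((C,\Lambda_C);\vec 0\bigr).$$
The base term lies in $\QQ\cch\ox\QQ\cch\subseteq\QQ\dch\ox\QQ\dch$ by the first part, and each $D_i$ only alters the decoration on one tensor factor while leaving the underlying lattice cone unchanged; hence it preserves $\QQ\dch\ox\QQ\dch$. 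This yields $\Delta(\QQ\dch)\subseteq \QQ\dch\ox\QQ\dch$.

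The only delicate point I anticipate is matching the brief prose description of $\cch$ (``Chen cones, their faces and their transverse lattice cones'') with the full content of $H_{C_0}$, which must also contain the iterated transverse cones $t(F,F')$ with $F$ a proper face of $C_0$. The transitivity property of Proposition~\ref{pp:transversecone}(\ref{it:tra}) combined with compatibility with the partial order (\ref{it:com1}) shows that such $t(F,F')$ appear as faces of $t(C_0,F')$, so enlarging the reading of ``transverse lattice cones'' to include these is forced upon us by closure under $\Delta$; once this is recognised the argument is a straightforward assembly of the results proved earlier.
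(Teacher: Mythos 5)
Your proposal is correct and follows the route the paper evidently intends (the corollary is stated without proof): closure of $\QQ\cch$ under $\Delta$ is exactly Corollary~\ref{coro:SubCoAlg} summed over all Chen cones, and closure of $\QQ\dch$ then follows from Eq.~(\ref{eq:codcNew}) since the operators $D_i$ only shift decorations. Your remark that ``their transverse lattice cones'' must be read to include the iterated transverse cones $t(F,F')$ for proper faces $F\preceq C$ (forced by Proposition~\ref{pp:transversecone}) is exactly the right clarification of the statement.
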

\section {Renormalisation on Chen cones}
\mlabel {sec:ChenCone}

We want to renormalise multiple zeta values, so we consider the space $\QQ \dch$.
For a \ltcone $\lC$, one way to regularise the sum
$$
\sum_{\vec n \in C^o\cap \Lambda_C}1
$$
is to introduce a linear form $\alpha$ on $V_k$ and a parameter $\e $, and then define
$$
\phi \lC:=\sum_{\vec n \in C^o\cap \Lambda_C}e^{\alpha (\vec n)\e}.
$$
Usually, we assume that $\alpha $ is rational, that is $\alpha (\vec n)\in \Q $ for $\vec n \in \Lambda _k$.

A problem arises with this regularisation, namely in order for $S\lC(\e)$ to be a Laurent series in $\e$, we need $   {\rm Ker}(\alpha)\cap C^o\cap  \Lambda_C=\{0\}$ for otherwise there are infinite many $1$'s in the summation.

\begin{remark}
\label{rk:linearform}
\begin{enumerate}
\item
For a single \ltcone, it is easy to find such a linear function $\alpha $, but problems can arise to find a linear function well suited for a family of \ltcones.
For the family $\cc$, it is impossible to find a universal $\alpha$; take any $v \in ker(\alpha)$, then $\alpha $ vanishes  on $\langle v\rangle$.
\item
For the family of cones in the the first orthant, it is also impossible to find a universal $\alpha$. This can be reduced to the two dimensional case. Any rational vector $v$ in the open upper half plane defines a cone $\langle v\rangle$ in the first quadrant or a transverse cone $\langle v\rangle=t(C,f)$ of a face $f$ of a two dimensional cone  $C$ in the first quadrant. Choosing $v$ in Ker$(\alpha)$, implies that  $\alpha $ vanishes  on $\langle v\rangle$. This extends to the closed upper half-plane since $\langle e_1\rangle$ is a cone in  the first quadrant.
\end{enumerate}
\end{remark}

However,  it is possible to find such an $\alpha $ for a small enough family, for example the family $\cch$.

\begin {prop} A linear form $\alpha =\sum a_i e_i^*$ is negative on all cones in $\cch$ if and only if $a_i<a_{i+1}<0$ for $i\in \N$.
\end{prop}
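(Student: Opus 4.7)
The key observation is that a linear form is negative on a closed convex cone if and only if it is strictly negative on each primary generator. Writing $v_j := e_1+\cdots+e_j$, the cones in $\cch$ come in three families (Chen cones, their faces, transverse cones to such faces), and these are the generators that need to be tested. My plan is to handle the two implications separately, using transverse cones to codimension-one faces to extract the necessary conditions, and an explicit orthogonal chunk decomposition of $\lin(F)^\perp$ to verify sufficiency in general.

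For the necessity direction ($\Rightarrow$), I would pinpoint two specific families of transverse cones that yield the two inequalities. Taking $F=\langle v_1,\ldots,v_{k-1}\rangle \preceq C_k^{\mathrm{Chen}}$ in $V_k$ gives $\lin(F)=\mathrm{span}(e_1,\ldots,e_{k-1})$, hence $\lin(F)^\perp=\R e_k$ and $\pi(v_k)=e_k$, so $t(C_k^{\mathrm{Chen}},F)=\langle e_k\rangle$, forcing $a_k<0$. Taking instead $F=\langle v_1,\ldots,v_{k-1},v_{k+1}\rangle\preceq C_{k+1}^{\mathrm{Chen}}$ in $V_{k+1}$ gives $\lin(F)=\mathrm{span}(e_1,\ldots,e_{k-1},e_k+e_{k+1})$ and $\lin(F)^\perp=\R(e_k-e_{k+1})$; since $\pi(v_k)=\tfrac{1}{2}(e_k-e_{k+1})$, one gets $t(C_{k+1}^{\mathrm{Chen}},F)=\langle e_k-e_{k+1}\rangle$, forcing $a_k<a_{k+1}$. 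Combining these two for every $k$ yields $a_k<a_{k+1}<0$.

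For sufficiency ($\Leftarrow$), assume $a_i<a_{i+1}<0$ for all $i$. On Chen cones and their faces, $\alpha(v_j)=a_1+\cdots+a_j<0$ is immediate. For a general transverse cone $t(C_k^{\mathrm{Chen}},F)$ with $F=\langle v_{i_1},\ldots,v_{i_l}\rangle$, the key is the orthogonal decomposition $V_k=U_0\oplus U_1\oplus\cdots\oplus U_l$ where $U_0=\mathrm{span}(e_1,\ldots,e_{i_1})$, $U_j=\mathrm{span}(e_{i_j+1},\ldots,e_{i_{j+1}})$ for $1\le j<l$, and $U_l=\mathrm{span}(e_{i_l+1},\ldots,e_k)$; writing $\mathbf{1}_j$ for the all-ones vector in $U_j$, one has $\lin(F)=\bigoplus_{j=0}^{l-1}\R\mathbf{1}_j$, so the projection $\pi$ onto $\lin(F)^\perp$ stabilises each $U_j$ and is the identity on $U_l$. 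The generators of $t(C_k^{\mathrm{Chen}},F)$ are then of two shapes: (i) $e_{i_l+1}+\cdots+e_m$ for $i_l<m\le k$, on which $\alpha$ is a sum of negative $a_i$'s; and (ii) for $0\le j<l$ and $i_j<m<i_{j+1}$ (with $i_0:=0$), $\pi(v_m)=\tfrac{1}{p+q}\bigl(q(e_{i_j+1}+\cdots+e_m)-p(e_{m+1}+\cdots+e_{i_{j+1}})\bigr)$ with $p=m-i_j$ and $q=i_{j+1}-m$. On type (ii), $\alpha(\pi(v_m))<0$ reduces to the block-average inequality $\tfrac{1}{p}\sum_{s=i_j+1}^m a_s<\tfrac{1}{q}\sum_{s=m+1}^{i_{j+1}}a_s$, which follows at once from the strict monotonicity of $(a_i)$. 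The main technical step is the chunk decomposition that enables explicit enumeration of the transverse-cone generators; once it is in place, the sign verification is routine.
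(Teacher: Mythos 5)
Your proof is correct and follows essentially the same route as the paper's: reduce negativity to the generators of each transverse cone, extract $a_k<0$ and $a_k<a_{k+1}$ from codimension-one faces, compute $\pi_{F^\perp}(v_m)$ explicitly, and conclude sufficiency from the block-average (strict monotonicity) inequality. Your orthogonal chunk decomposition $V_k=U_0\oplus\cdots\oplus U_l$ is merely a tidier way of organising the same projection computation the paper carries out directly with the vectors $e_t-e_{j_\ell}$, and the resulting generator formulas agree.
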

\begin {proof} In order to give the proof, we first determine the form of the transverse cones to faces of a Chen cone $C:=\langle v_1,\cdots,v_k\rangle$, where we have set $v_i:=e_1+\cdots +e_i$ for $i\geq 1$. For positive integers $p<q$, denote $[p,q]:=[p,p+1,\cdots, q]$, and $v_{[p,q]}=v_p, v_{p+1}, \cdots, v_q$. Then a face of $C$ is of the form
$$F=\langle v_{[j_0,i_1]},v_{[j_1,i_2]},\cdots,v_{[j_n,i_{n+1}]}\rangle,
\quad 0=:i_0\leq j_0\leq i_1 \bar{\leq} j_1\leq i_2\bar{\leq} j_2\leq\cdots \leq i_{n-1}\bar{\leq}j_n\leq i_{n+1}\leq j_{n+1}:=k+1.$$
Here $p\bar{\leq} q$ means $p+2\leq q$.
Then the transverse cone is generated by $\pi_{F^\perp}(v_m)$ with $i_\ell<m<j_\ell$, for $0\leq \ell \leq n+1$ with $i_\ell \bar{\leq} j_\ell$ .

First let us compute $\pi_{F^\perp}(e_m)$ for $i_\ell<m<j_\ell$, for $0\leq \ell \leq n+1$ with $i_\ell \bar{\leq} j_\ell$.
We know that
\begin {enumerate}
\item[(a1)] if $\ell=0$, $i_0 \bar{\leq} j_0$, then
$$e_m=\frac {j_0-i_0-1}{j_0 -i_0 }(e_m-e_{j_0})-\frac {1}{j_ 0-i_0 }\sum _{i_0 <t<j_0, t\not=m}(e_t-e_{j_0})+\frac {1}{j_0 -i_0 }(v_{j_0}).
$$
\item[(a2)] if $0<\ell <n+1$, $i_\ell \bar{\leq} j_\ell$, then
$$e_m=\frac {j_\ell-i_\ell-1}{j_\ell -i_\ell }(e_m-e_{j_\ell})-\frac {1}{j_\ell -i_\ell }\sum _{i_\ell <t<j_\ell, t\not=m}(e_t-e_{j_\ell})+\frac {1}{j_\ell -i_\ell }(v_{j_\ell})-\frac {1}{j_\ell -i_\ell }(v_{i_\ell}).
$$
\item[(b)] if $\ell=n+1$, $i_{n+1} \bar{\leq} j_{n+1}$, then
$$e_m=e_m.
$$
\end{enumerate}
For $0\le \ell <n+1$ and $i_\ell <t<j_\ell$, there is $(e_t-e_{j_\ell})\perp {\rm lin}(F)$. For $\ell =n+1$ and $i_{n+1} <t<j_{n+1}$, there is $e_t\perp {\rm lin}(F)$. Thus for the projection of $e_m$ we have
\begin {enumerate}
\item if $0\le \ell <n+1$, $i_\ell \bar{\leq} j_\ell$, $i_\ell<m<j_\ell$, then
 $$\pi_{F^\perp}(e_m)=\frac {j_\ell-i_\ell-1}{j_\ell -i_\ell }(e_m-e_{j_\ell})-\frac {1}{j_\ell -i_\ell }\sum _{i_\ell <t<j_\ell, t\not=m}(e_t-e_{j_\ell}).
$$
\item if $\ell =n+1$, $i_{n+1} \bar{\leq} j_{n+1}$, $i_{n+1}<m<j_{n+1}$, then
 $$\pi_{F^\perp}(e_m)=e_m.
$$
\end{enumerate}
Therefore,
\begin{enumerate}
\item if $0\le \ell <n+1$, $i_\ell \bar{\leq} j_\ell$, $i_\ell<m<j_\ell$, then
\begin{eqnarray*}
\pi_{F^\perp}(v_m)&=&\frac {j_\ell-m}{j_\ell -i_\ell }\sum _{i_\ell <t\le m}(e_t-e_{j_\ell})-\frac {m-i_\ell}{j_\ell -i_\ell }\sum _{m <t<j_\ell}(e_t-e_{j_\ell})\\
&=&\frac {j_\ell-m}{j_\ell -i_\ell }\sum _{i_\ell <t\le m}e_t-\frac {m-i_\ell}{j_\ell -i_\ell }\sum _{m <t\le j_\ell}e_t.
\end{eqnarray*}
\label{it:case1}
\item if $\ell =n+1$, $i_{n+1} \bar{\leq} j_{n+1}$, $i_{n+1}<m<j_{n+1}$, then
$\pi_{F^\perp}(v_m)=e_{i_{n+1}+1}+\cdots +e_m.$
\label{it:case2}
\end{enumerate}

We are now ready to prove the proposition, noting that $\alpha$ is negative on a transverse cone if and only if it is so on its generators $\pi_{F^\perp}(v_m), i_\ell<m<j_\ell, 0\leq \ell\leq n+1$.

Let $\alpha $ be negative on all transverse cones to faces the cone $C=\langle v_1,\cdots,v_k\rangle, k\geq 1$. Then the transverse cone for the face $\langle v_1, \cdots, \hat v_i, \cdots, v_k\rangle$ (the cone spanned by $v_1, \cdots , v_k$ except $v_i$), $i=1, \cdots , k-1$, is spanned by $\frac 12 (e_i-e_{i+1})$, by the above Case (\ref{it:case1}). Then applying $\alpha $ to this transverse cone, we have $a_i<a_{i+1}$. Now for the cone $\langle v_1, \cdots, v_{k-1}\rangle$, by Case~(\ref{it:case2}), the transverse cone is generated by $e_k$, applying $\alpha$ yields $a_k<0$. This is what we need.

Conversely, suppose that $\alpha =\sum a_ie_i^*$ satisfies $a_i<a_{i+1}<0$. Clearly, $\alpha$ is negative on $C$ and its faces. It is also negative on $\pi_{F^\perp}(v_m)$ in the Case (\ref{it:case2}). For $\pi_{F^\perp}(v_m)$ in Case~(\ref{it:case1}),
using the fact
$$\frac {j_\ell-m}{j_\ell -i_\ell }\sum _{i_\ell <t\le m}1=\frac {m-i_\ell}{j_\ell -i_\ell }\sum _{m <t\le j_\ell}1,
$$
we find
$\alpha (\pi_{F^\perp}(v_m))<0.$ Therefore $\alpha $ is negative on all transverse cones.
\end{proof}


We now fix a linear function $\alpha =\sum a_i e_i^*$ with $a_i<a_{i+1}<0$, and for $(\lC, \vec s)\in \dch$,  we set
\begin{equation}
\mlabel {eq:RegPhi}
\phi (\lC, \vec s)=\sum _{\vec n \in \Lambda _C \cap C^o}\frac {e^{\alpha (\vec n)\e}}{\vec n ^{\vec s}},
\end{equation}

Applying the same proof as for Lemma 4.4 in \cite {GPZ4}, we have

\begin {lemma}
The map $\phi(C,\Lambda_C)$ is a meromorphic function in $\e$ for any coloured lattice cone $((C,\Lambda_C),\vec{s})$ in $\dch$.
\end {lemma}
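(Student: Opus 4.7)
My plan is to reduce the sum to a finite linear combination of one-variable polylogarithms evaluated at exponentials of $\e$, and then conclude via the classical fact that polylogarithms at non-positive integer indices are rational functions.

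First I would subdivide $(C,\Lambda_C)$ into smooth simplicial lattice cones, which is possible by the subdivision proposition just proved in Section 3.1. Writing $C$ as the disjoint union of the relative interiors of the faces of the subdivision and discarding those faces contained in $\partial C$, one expresses $\phi((C,\Lambda_C);\vec s)$ as a finite sum of contributions of the form $\sum_{\vec n\in D^o\cap\Lambda_D} e^{\alpha(\vec n)\e}/\vec n^{\vec s}$, where each $(D,\Lambda_D)$ is a smooth simplicial lattice cone. It thus suffices to prove meromorphicity on each such smooth summand.

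On a given summand, choose a monoid basis $w_1,\ldots,w_\ell$ of $\Lambda_D\cap D$, so that the interior lattice points are exactly $\{\sum_{j=1}^\ell k_j w_j \mid k_j\in\ZZ_{\geq 1}\}$. The exponential factor then factors as $\prod_j (e^{\alpha(w_j)\e})^{k_j}$, while $1/\vec n^{\vec s}$, with $\vec s\in\ZZ_{\leq 0}^\infty$, becomes a polynomial $P(k_1,\ldots,k_\ell)$ after substituting $\vec n=\sum_j k_j w_j$ coordinatewise. The sum thereby becomes a finite $\QQ$-linear combination of products $\prod_j \mathrm{Li}_{-m_j}(e^{\alpha(w_j)\e})$ with $m_j\in\ZZ_{\geq 0}$.

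By the preceding proposition on $\alpha$ applied to the ambient cone from $\cch$, one has $\alpha(w_j)<0$ for every $j$, so each series $\mathrm{Li}_{-m_j}(x)=\sum_{k\geq 1} k^{m_j} x^k$ converges for $|x|<1$ and equals $A_{m_j}(x)/(1-x)^{m_j+1}$ for an Eulerian polynomial $A_{m_j}$. Substituting $x=e^{\alpha(w_j)\e}$ yields a meromorphic function of $\e\in\C$; finite products and sums of these remain meromorphic, so $\phi((C,\Lambda_C);\vec s)$ is meromorphic in $\e$. The main bookkeeping obstacle is in justifying the decomposition of $C^o$ into the relative interiors of the non-boundary faces of the subdivision; the analytic content is fully encoded in the rationality of the Eulerian polylogarithms, which in fact also bounds the order of the pole at $\e=0$ by $\sum_j(m_j+1)$, a byproduct that will be useful when feeding $\phi$ into the \abf of Section 2.
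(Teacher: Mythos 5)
Your proof is correct and follows essentially the same route as the paper's: the paper gives no argument here beyond invoking Lemma 4.4 of [GPZ4], whose proof is the smooth-subdivision-plus-geometric-series computation you carry out, with the identity $\mathrm{Li}_{-m}(x)=A_m(x)/(1-x)^{m+1}$ being the natural way to absorb the decoration $\vec s$. The one point worth making explicit is that the negativity of $\alpha$ on $\cch$ gives absolute convergence only on the half-plane $\mathrm{Re}(\e)>0$, where the sum is identified with the rational expression in the $e^{\alpha(w_j)\e}$; meromorphy on all of $\C$ is then read off from that closed form by continuation.
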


This gives rise to a linear map:
$$\phi : \Q \dch\to \C [\e ^{-1}, \e ]]
$$
to which  we can then apply Connes-Kreimer's renormalisation scheme on the coalgebra of Chen cones as in Theorem~\ref{thm:abf}, without bothering   about the product structure. So, applying the induction formula
with $(R,P)=(\C [\e ^{-1}, \e ]], -\pi_+)$, where $\pi _+$ is the projection to the holomorphic part, we have
$$\phi =\phi^{\ast (-1)}_-\ast \phi _+,
$$
where $\phi^{\ast (-1)}_-$ is the holomorphic part and $\phi _+$ is the polar part. Here $\phi _-$ takes values in $\C [[\e]]$ and $\phi _+$ takes values in $\C [\e ^{-1}]$.

Let us define renormalised multiple zeta values as
\begin{equation}\label{eq:renzeta}\zeta ^{\rm ren}(\lC ,\vec s):=\phi^{\ast (-1)}_-(\lC,\vec s)(0).
\end{equation}

We will see that the renormalised multiple zeta values do not depend on the parameters $a_i $, a fact which might seem surprising at first glance and that  will be proved in the sequel. An important consequence is that the parameters can be seen as formal parameters,  thus allowing for a regularisation in a more general situation than the one of Chen cones considered here.

\section{ Renormalised conical zeta values}

As we previously discussed, it is impossible to find a universal linear function $\alpha $  which would regularise all cones simultaneously, but it is possible to find one for the family of Chen cones; in the Chen cone case, we renormalise along a  direction $\vec a:=(a_1, a_2, \cdots) \e$. Since the parameter $\e $ can be viewed as a re-scaling of variables, this suggests  to replace the parameters  $\vec a:= (a_1, a_2, \cdots , a_k)$  by the variables
  $\vec \e  =\sum \e _ie^*_i \in V^*$, where $\e _1:= a_1\e, \e _2:=a_2\e, \cdots, \e_k:= a_k\e $ , and to  define

\begin{equation}\label{eq:Somultiple}
S^o_k (\lC;\vec s\,)(\vec \e ):= \sum _{\vec n \in \Lambda _C \cap C^o}\frac {e^{<\vec n, \vec \e >}}{\vec n ^{\vec s}}=\sum_{(n_1,\cdots,n_k)\in C^o\cap \Lambda_C} \frac{e^{n_1\e _1}\cdots e^{n_k \e _k}}{n_1^{s_1}\cdots n_k^{s_k}}= \sum_{\vec{n}\in C^o\cap \Lambda_C} \frac{e^{\langle \vec n, \vec \e \rangle}}{\vec n^{\vec{s}}}
\end{equation}
for  a simplicial \ltcone (so in particular it is strongly convex) $\lC\in \cc$  with $C\subset \RR ^k$ and where we have set
$\vec{n}^{\vec{s}}=n_1^{s_1}\cdots n_k^{s_k}$ with $\vec n:=(n_1,\cdots,n_k)\in \Lambda _C$ and $\vec s=(s_1,\cdots,s_k)\in \ZZ_{\leq 0}^k$. \\
The  sum \sy{(\ref{eq:Somultiple})}  is absolutely convergent on
$$ \check C^-\delete{=\check C^-_k}:=\Big\{\vec \e:=\sum_{i=1}^k \e_ie_i^* \,\Big|\, \langle \vec x, \vec \e \rangle <0 \text{ for all } \vec x\in C\Big\},$$
which like $C$, has  dimension $k$.
\begin{remark} With our convention that $0^s=1$ for $s$ with $\mathrm{Re}(s)\leq 0$, the function $ S^o_k(\lC;\vec s\,)(\vec \e)$ in the variables $\vec \e =\sum \e _i e_i^*$ does not depend on the choice of $k\geq 1$ such that $C\subseteq V_k$ and $\vec s \in \ZZ ^k_{\le 0}$. Thus we will suppress the subscript $k$ in the sum.
 \end{remark}

Choosing the above multivariate regularisation implies that--
in contrast to  Connes and Kreimer's renormalisation scheme-- the range space is no longer the space of Laurent series. The  new target space is a space of multivariate meromorphic germs discussed in \cite {GPZ3} which is not a Rota-Baxter algebra, thus requiring~\footnote{As observed in~\cite{GPZ4}, the renormalised conical values we derive here by means of a multivariate Algebraic Birkhoff
Factorisation, can alternatively be derived directly from the derivatives of the exponential sums on cones by means of
the projection onto the holomorphic part of the meromorphic germs they give rise to, an alternative renormalisation
method which gives rise to the same conical values.}  the  generalised version of Connes and Kreimer's renormalisation scheme corresponding to Theorem \ref{thm:Birkhoff}.


\subsection {Regularisations}

The function $S^o(\lC ,\vec s)$ is a  very specific type of meromorphic function, for it has linear poles. We briefly review the relevant definitions, and refer the reader to \cite {GPZ3} for a more detailed discussion.

\begin{defn} Let $k$ be a positive integer.
\begin{enumerate}
\item
A {\bf germ of meromorphic functions at 0} on $\C^k$ is the quotient of two holomorphic functions in a neighborhood of 0 inside $\C^k$.
\item
A germ of meromorphic functions $f(\vec \e)$ on $\C^k$ is said to have {\bf  linear poles at zero with rational coefficients} if there exist vectors $L_1, \cdots, L_n\in \Lambda_k\otimes \QQ  $ (possibly with repetitions) such that $f\,\Pi_{i=1}^n L_i$ is a holomorphic germ at zero whose Taylor expansion has rational coefficients.
\item
We will denote by $\calm_\QQ  (\C^k)$  the set of germs of meromorphic functions on $\C^k$ with linear poles at zero with rational coefficients. It is a linear subspace over $\QQ$.
\end{enumerate}
\mlabel{de:fr}
\end{defn}

Composing with the projection $\C^{k+1} \to \C^k$   dual to the inclusion $j_k:\C^k\to \C^{k+1}$ then yields the embedding
$$\calm_\coef  (\C^k)\hookrightarrow \calm_\coef  (\C^{k+1}),$$
thus giving rise to the direct limit
$$\calm_\coef  ( \C^\infty):
=\dirlim  \calm_\coef  (\C^k)
=\bigcup_{k=1}^\infty  \calm_\coef  (\C^k).
$$

\begin{prop} \cite {GPZ3}
\mlabel{prop:DecomF} There is a direct sum decomposition
$$\calm_\QQ ( \C ^\infty)=\calm _{\QQ,-}(\C ^\infty)\oplus \calm _{\QQ,+} (\C ^\infty). $$
Thus we have the projection map
\begin{equation}
\pi_+ : \calm_\QQ ( \C ^\infty) \to \calm _{\QQ,+} (\C ^\infty).
\mlabel{eq:merodec}
\end{equation}
\end{prop}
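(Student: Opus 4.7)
The plan is to build the decomposition out of a canonical normal form for meromorphic germs with linear poles, the splitting being driven by the inner product $Q$ supplied with the filtered lattice Euclidean space structure. Since $\calm_\QQ(\C^\infty)$ is the direct limit of the spaces $\calm_\QQ(\C^k)$, the strategy is to construct decompositions $\calm_\QQ(\C^k)=\calm_{\QQ,-}(\C^k)\oplus\calm_{\QQ,+}(\C^k)$ level by level and to verify that they are compatible with the inclusions $\C^k\hookrightarrow\C^{k+1}$, so that they pass to the colimit.

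First I would establish a normal form: any $f\in\calm_\QQ(\C^k)$ can be written as a finite $\QQ$-linear combination of elementary fractions
$$\frac{h(\vec\e)}{L_1(\vec\e)^{a_1}\cdots L_r(\vec\e)^{a_r}},$$
where $h$ is a holomorphic germ at $0$ with rational Taylor coefficients, the $L_i\in\Lambda_k^{*}\otimes\QQ$ are linearly independent rational linear forms, and $a_i\geq 1$. This is achieved by an induction on the number of distinct pole hyperplanes, starting from a presentation $f=g/\prod_j L_j^{b_j}$ and using partial-fraction identities along a generic rational direction to reduce the number of jointly occurring factors until linear independence is reached.

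Next I would use the inner product $Q_k$ to cut these elementary fractions into two canonical pieces. The isomorphism $Q_k^{*}\colon V_k\to V_k^{*}$ identifies each $L_i$ with a vector $\tilde L_i\in V_k$, and the orthogonal decomposition
$$V_k=\mathrm{span}(\tilde L_1,\ldots,\tilde L_r)\oplus\mathrm{span}(\tilde L_1,\ldots,\tilde L_r)^{\perp}$$
induces a splitting $h=h_\parallel+h_\perp$ of the Taylor expansion of $h$. Terms of $h_\parallel$ can be cancelled against the denominators $L_i^{a_i}$, strictly reducing the total pole order, and the remaining, irreducible piece has $h$ depending only on variables orthogonal to $\tilde L_1,\ldots,\tilde L_r$. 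I would then \emph{define} $\calm_{\QQ,-}(\C^k)$ to be the $\QQ$-span of such irreducible fractions, and $\calm_{\QQ,+}(\C^k)$ to be the holomorphic germs at $0$ with rational Taylor coefficients. The reduction procedure, applied to the normal form above, exhibits every $f$ as a sum of an element of $\calm_{\QQ,-}$ and an element of $\calm_{\QQ,+}$.

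The main obstacle is showing that this decomposition is well-defined, because the partial-fraction normal form is far from unique. I would address this by giving an intrinsic characterisation of $\calm_{\QQ,-}(\C^k)$ that does not refer to a choice of presentation: a germ belongs to $\calm_{\QQ,-}(\C^k)$ if and only if in every partial-fraction presentation with linearly independent pole forms $L_1,\ldots,L_r$, the corresponding numerators vanish identically on $\mathrm{span}(\tilde L_1,\ldots,\tilde L_r)$. This characterisation immediately forces $\calm_{\QQ,-}(\C^k)\cap\calm_{\QQ,+}(\C^k)=\{0\}$, since a holomorphic germ whose numerator vanishes on the pole directions extends holomorphically across those hyperplanes and hence is zero modulo a polar reduction. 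Combined with the existence half supplied by the reduction procedure, this yields the direct sum decomposition, and the projection $\pi_+$ is then well-defined. Compatibility with the inclusions $\C^k\hookrightarrow\C^{k+1}$ follows from the compatibility of the inner products $Q_k$ built into the definition of the filtered lattice Euclidean space, so the decomposition descends to $\calm_\QQ(\C^\infty)$.
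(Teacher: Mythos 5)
The paper itself gives no proof of this proposition: it is quoted from the preprint \cite{GPZ3} (whose very title, ``Residue of meromorphic functions with linear poles'', points to the tool your argument is missing). Your overall architecture does match the cited reference: reduce to a partial-fraction normal form $h/(L_1^{a_1}\cdots L_r^{a_r})$ with linearly independent rational linear forms, use the inner product $Q$ to separate the variables of the numerator into those spanned by $\tilde L_1,\dots,\tilde L_r$ and those orthogonal to them, cancel the parallel part against the denominator, and define $\calm_{\QQ,-}$ as the span of the resulting ``polar germs''. The existence half of the decomposition, and the compatibility with the inclusions $\C^k\hookrightarrow\C^{k+1}$ via the compatibility of the $Q_k$, are fine.

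The genuine gap is in the uniqueness step, i.e.\ in showing $\calm_{\QQ,-}\cap\calm_{\QQ,+}=\{0\}$, which is what makes $\pi_+$ well defined. First, your ``intrinsic characterisation'' is wrongly stated: for a polar germ such as $1/L_1$ the numerator is the constant $1$, which does not vanish on $\mathrm{span}(\tilde L_1)$; what you mean is that the numerator depends only on the $Q$-orthogonal variables. Second, and more seriously, even the corrected statement does not ``immediately force'' the transversality. An element of $\calm_{\QQ,-}$ is a finite $\QQ$-linear combination of polar germs with \emph{different} denominator configurations, and the real content of the proposition is that no nontrivial such combination can be a holomorphic germ. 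Your one-line justification (``extends holomorphically across those hyperplanes and hence is zero modulo a polar reduction'') addresses at best a single fraction and is circular, since ``modulo a polar reduction'' presupposes the decomposition you are trying to establish; it also silently assumes that the property you use is independent of the chosen presentation, which is exactly the well-definedness issue. The cited reference closes this gap by constructing residue-type linear functionals (iterated residues along the pole hyperplanes, or equivalently evaluations after multiplying by $\prod_i L_i^{a_i}$ and restricting to $\mathrm{span}(\tilde L_1,\dots,\tilde L_r)^{\perp}$) that vanish on holomorphic germs and separate polar germs; some device of this kind, together with an induction on the number of distinct pole hyperplanes, is indispensable and is absent from your proposal.
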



 \delete{Let $\lC\in \cc$ be a simplicial \ltcone (so in particular it is strongly convex) with $C\subset \RR ^k$. Then the set
$$ \check C^-=\check C^-_k:=\Big\{\vec \e:=\sum_{i=1}^k \e_ie_i^* \,\Big|\, \langle \vec x, \vec \e \rangle <0 \text{ for all } \vec x\in C\Big\}$$
is of dimension $k$.

Denote
$\vec{n}^{\vec{s}}=n_1^{s_1}\cdots n_k^{s_k}$ with $\vec n:=(n_1,\cdots,n_k)\in \Lambda _C$ and $\vec s=(s_1,\cdots,s_k)\in \ZZ_{\leq 0}^k$ \zb {this should be put early}.
The sums
\begin{equation}
S^o_k (\lC;\vec s\,)(\vec \e ):= \sum_{(n_1,\cdots,n_k)\in C^o\cap \Lambda_C} \frac{e^{n_1\e _1}\cdots e^{n_k \e _k}}{n_1^{s_1}\cdots n_k^{s_k}}= \sum_{\vec{n}\in C^o\cap \Lambda_C} \frac{e^{\langle \vec n, \vec \e \rangle}}{\vec n^{\vec{s}}}
 \mlabel{eq:regczv}
\end{equation}
 is absolutely convergent on $\check C^-$. By our convention of $0^s=1$ for $s$ with $\mathrm{Re}(s)\leq 0$, the function $ S^o_k(\lC;\vec s\,)(\vec \e)$ in the variables $\vec \e =\sum \e _i e_i^*$ does not depend on the choice of $k\geq 1$ such that $C\subseteq V_k$ and $\vec s \in \ZZ ^k_{\le 0}$. Thus we will suppress the subscript $k$ in the sum.}

A subdivision technique then yields the following.
\begin{prop-def}\cite {GPZ4} For any simplicial lattice cone $\lC$,
the map $S^o (\lC ;\vec s) (\vec \e )$
defines an element in $\calm_\QQ ( \C ^\infty)$.

For a general lattice cone $\lC$, the germ of functions $\sum\limits _{F\in \mathcal{F}^o(\conefamilyc )}   S^o((F,\Lambda_F);\vec s)$ does not depend on the choice of the simplicial subdivision $\conefamilyc=\{(C_i,\Lambda _{C_i})\}_{i\in [n]}$ of $\lC$. Thus we extend
(\ref{eq:Somultiple}) to any lattice cone setting
$$S^o(\lC;\vec s) :=\sum\limits _{F\in \mathcal{F}^o(\conefamilyc )}   S^o((F,\Lambda_F);\vec s),
$$
for any simplicial subdivision $\conefamilyc=\{(C_i,\Lambda _{C_i})\}_{i\in [n]}$ of $\lC$.
\mlabel{defn:SI}
\end{prop-def}

Consequently,  we have a linear map
$$S^o: \QQ\dcc \to \calm_\Q(\C^\infty), \quad (\lC;\vec s) \mapsto S^o(\lC;\vec s).$$
By definition, the following conclusion holds.

\begin{coro}Let $\lC$ be a \ltcone \ and let $\conefamilyc =\{(C_1, \Lambda _C),\cdots, (C_r, \Lambda _C) \}$ be  a  subdivision of $C$.
Then for $\vec s \in \ZZ ^k_{\le 0}$ we have
$$S^o(\lC;\vec s\,)=\sum _{F\in \mathcal{F}^o(\conefamilyc)}   S^o((F,\Lambda_C\cap {\rm lin}(F));\vec s\,)$$
in $\calm_\Q(\C^\infty)$.
\mlabel{coro:MlSub}
\end{coro}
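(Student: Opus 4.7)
The plan is to work on the convergence region $\check C^-$, where both sides of the claimed identity are absolutely convergent exponential sums, and then promote the equality to an identity of meromorphic germs by unique analytic continuation.

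First I would establish a master formula valid for every lattice cone: the meromorphic germ $S^o(\lC;\vec s)$ coincides, on $\check C^-$, with the absolutely convergent sum $\sum_{\vec n\in C^o\cap\Lambda_C} e^{\langle\vec n,\vec\e\rangle}/\vec n^{\vec s}$. For simplicial $\lC$ this is the definition in Eq.~(\ref{eq:Somultiple}); in the general case it follows from Proposition-Definition~\ref{defn:SI} applied to any simplicial subdivision $\conefamilye$ of $\lC$, combined with the partition $C^o\cap\Lambda_C=\bigsqcup_{G\in\mathcal{F}^o(\conefamilye)}(G^o\cap\Lambda_G)$. The latter rests on two elementary polyhedral facts: the relative interiors of the faces in $\mathcal{F}^o(\conefamilye)$ partition $C^o$ (the interior face-partition for a polyhedral subdivision), and $\Lambda_G\cap G^o=\Lambda_C\cap G^o$, since $\Lambda_G=\Lambda_C\cap\lin(G)$ and $G^o\subseteq\lin(G)$.

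Applying the master formula to $\lC$ directly, and separately to each face $(F,\Lambda_F)$ with $F\in\mathcal{F}^o(\conefamilyc)$ (noting $\check C^-\subseteq\check F^-$ because $F\subseteq C$), both sides of the identity in the corollary become, on the nonempty open set $\check C^-\subseteq\C^k$, sums of exponentials indexed respectively by $C^o\cap\Lambda_C$ and by $\bigsqcup_{F\in\mathcal{F}^o(\conefamilyc)}(F^o\cap\Lambda_F)$. Since $\conefamilyc$ is a subdivision of $C$, the interior face-partition now applied to $\conefamilyc$, together with $\Lambda_F\cap F^o=\Lambda_C\cap F^o$, shows that these two index sets coincide; hence the two sides agree on $\check C^-$. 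Both sides being meromorphic germs at $0$ in $\calm_\QQ(\C^\infty)$ that agree on a nonempty open set, the identity principle promotes the equality to one of germs, yielding the corollary.

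The main obstacle I anticipate is the master formula in the non-simplicial case: one must verify carefully that the termwise resummation over the simplicial cells of an arbitrary simplicial subdivision reproduces the full sum over $C^o\cap\Lambda_C$, which amounts to the interior face-partition together with the lattice compatibility $\Lambda_G=\Lambda_C\cap\lin(G)$ needed to identify $\Lambda_G\cap G^o$ with $\Lambda_C\cap G^o$. Once the master formula is secured, the rest of the argument is pure combinatorics of polyhedral subdivisions and unique analytic continuation of meromorphic germs.
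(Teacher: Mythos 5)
Your argument is correct where it applies, and it is genuinely more explicit than the paper's, which dismisses the corollary with the single phrase ``By definition, the following conclusion holds'': the intended route there is purely combinatorial --- refine each $C_i$ simplicially to obtain a common simplicial subdivision $\conefamilyd$ of $C$, invoke the independence of the choice of simplicial subdivision asserted in Proposition-Definition~\ref{defn:SI}, and regroup the faces via $\mathcal{F}^o(\conefamilyd)=\bigsqcup_{F\in\mathcal{F}^o(\conefamilyc)}\mathcal{F}^o(\conefamilyd\vert_F)$, which rests on exactly the same interior face-partition you use. Your analytic route instead identifies every term with an absolutely convergent exponential sum on $\check C^-$ and transfers the lattice-point partition $C^o\cap\Lambda_C=\bigsqcup_F(F^o\cap\Lambda_F)$ through termwise resummation and the identity principle; this buys a self-contained verification that does not presuppose the face regrouping, at the cost of re-deriving (as your ``master formula'') what is essentially the well-definedness claim of Proposition-Definition~\ref{defn:SI}. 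One caveat: the analytic step needs $\check C^-\neq\emptyset$, which fails precisely when $C$ contains a line, since $\langle \vec x,\vec\e\,\rangle<0$ and $\langle -\vec x,\vec\e\,\rangle<0$ cannot hold simultaneously and the open sum then diverges everywhere. As the corollary is stated for an arbitrary \ltcone\ in $\cc$, cones with nontrivial lineality must be handled by the combinatorial regrouping through a common simplicial refinement; for strongly convex cones --- the only ones occurring in the paper's applications --- your proof is complete.
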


One advantage to work with this multivariate regularisation is that the target space is stable under partial derivatives, and we thus have a linear map compatible with coderivatives..\delete{ allowing us to extend $S^o$.  to $\Q\dcc$ \zb {this is strange, we already defined $S^o$ in $\Q\dcc$ }}. Let
$$\partial _{i}=\frac {\partial }{\partial \e _i}.
$$
By an analytic continuation argument, we have the following relations between regularised conical zeta values.
\begin {prop}\mlabel{prop:partialS} For the linear map
$$S^o: \QQ\dcc \to \calm_\Q(\C^\infty)$$
and any $i\in \ZZ _{>0}$,
$$S^o \delta _i=\partial _i S^o.
$$
That means for any $(\lC, \vec s)$ in $\dcc$,
we have
$$S^o(\lC;\vec{s})( \vec \e )=  \partial^{-\vec s\,} S^o \lC( \vec \e  ),$$
where $\partial ^{-\vec s\,}=\partial_{1}^{-s_1}\cdots \partial_{ k}^{-s_k}$.
\end{prop}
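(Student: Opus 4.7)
The plan is to reduce to the case of a simplicial cone via the subdivision formula, verify the identity directly on the absolutely convergent exponential sum, and then invoke analytic continuation in the space $\calm_\QQ(\C^\infty)$.

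By linearity of $S^o$, $\delta_i$ and $\partial_i$, it suffices to check the equality $S^o\delta_i(\lC;\vec s)=\partial_i S^o(\lC;\vec s)$ on a single generator $(\lC;\vec s)\in \dcc$. Fixing any simplicial subdivision $\conefamilyc$ of $(C,\Lambda_C)$ and applying Proposition-Definition \ref{defn:SI} together with Corollary \ref{coro:MlSub} to both sides, I reduce further to the case of a simplicial lattice cone, since $\delta_i$ affects only the colour and leaves the underlying cone and hence the subdivision untouched.

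For a simplicial lattice cone $\lC$ with $C\subset \RR^k$, the series
$$S^o(\lC;\vec s)(\vec \e)=\sum_{\vec n\in C^o\cap \Lambda_C}\frac{e^{\langle \vec n,\vec \e\rangle}}{\vec n^{\vec s}}$$
converges absolutely and locally uniformly on the open set $\check C^-$, since $\langle \vec n,\mathrm{Re}\,\vec \e\rangle$ is bounded above by a strictly negative linear form in $\vec n$ which dominates the polynomial factor $\vec n^{-\vec s}$. Termwise differentiation under the summation is therefore justified and yields
$$\partial_i S^o(\lC;\vec s)(\vec \e)=\sum_{\vec n\in C^o\cap \Lambda_C}\frac{n_i\, e^{\langle \vec n,\vec \e\rangle}}{\vec n^{\vec s}}=\sum_{\vec n\in C^o\cap \Lambda_C}\frac{e^{\langle \vec n,\vec \e\rangle}}{\vec n^{\vec s-e_i}}=S^o(\lC;\vec s-e_i)(\vec \e)$$
on $\check C^-$, which by definition of $\delta_i$ coincides with $S^o(\delta_i(\lC;\vec s))(\vec \e)$.

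Both sides being elements of $\calm_\QQ(\C^\infty)$ that agree as holomorphic functions on the non-empty open set $\check C^-$, they coincide as meromorphic germs by analytic continuation, establishing $S^o\delta_i=\partial_i S^o$. The iterated identity $S^o(\lC;\vec s)=\partial^{-\vec s\,}S^o\lC$ then follows by induction on $|\vec s\,|$: Lemma \ref{lem:diffc} gives $(\lC;\vec s)=\delta_1^{-s_1}\cdots \delta_k^{-s_k}(\lC;\vec 0)$, and iterating the just-proved relation together with the commutativity $\partial_i\partial_j=\partial_j\partial_i$ produces the stated formula. The only delicate point in the argument is the justification of term-by-term differentiation of the series, which is routine given the uniform exponential decay on compact subsets of $\check C^-$.
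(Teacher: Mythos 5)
Your proposal is correct and follows essentially the same route as the paper: reduce to simplicial lattice cones via the subdivision definition of $S^o$, differentiate the absolutely convergent exponential sum term by term on $\check C^-$ to get $\partial_i S^o(\lC;\vec s)=S^o(\lC;\vec s-e_i)=S^o(\delta_i(\lC;\vec s))$, and conclude in $\calm_\QQ(\C^\infty)$ by analytic continuation. The only difference is cosmetic — you perform the subdivision reduction up front while the paper extends from the simplicial case at the end — and your explicit justification of termwise differentiation and of the iterated formula via Lemma \ref{lem:diffc} merely spells out what the paper leaves implicit.
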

\begin {proof} For a given $\vec s \in \ZZ ^k_{\le 0}$ and a simplicial {\ltcone}  $\lC\in \cc $ with $C\subset \RR ^k$,  by absolute convergence we have
$$
\partial_{i}S^o (\lC ;\vec s) (\vec \e )=S^o(\lC;\vec s-e_i)(\vec \e) =S^o(\delta _i(\lC; \vec s))(\vec \e)
$$
for $\vec \e\in\check C^-$. Therefore by analytic continuation, in $\calm_\Q(\C^\infty)$, we have
$$
\partial_{i}S^o (\lC ;\vec s) (\vec \e )=S^o(\delta _i(\lC; \vec s))(\vec \e),
$$
that is,
$$S^o \delta _i=\partial _i S^o
$$
for any simplicial lattice cone. Then by definition of $S^o$,  $S^o \delta _i=\partial _i S^o$ holds in general.
\end{proof}


\subsection {Renormalisation}
We now  equip $\RR ^\infty$  with  an inner products $Q(\cdot,\cdot)$. This allows us to construct the coalgebra $\QQ \dcc$ from transverse lattice cones introduced in Section 2, and to apply \cite[Theorem~4.2]{GPZ3} in view of the linear decomposition
$$\calm _{\QQ }(\C ^\infty)=\calm _{\QQ, + }(\C ^\infty)\oplus\calm _{\QQ, -}(\C ^\infty).$$
Since $\calm _{\Q,+}(\C ^\infty )$ is a unitary subalgebra, the \abf in Theorem~\mref{thm:abf} applies, with $\coalg=\QQ \dcc$ and  $$A= \calm_\Q (\C ^\infty),\quad A_1= \calm _{\Q,+}(\C ^\infty), \quad A_2=\calm _{\Q,-}(\C^\infty),\quad P=\pi _+:\calm_\QQ(\C ^\infty)\to \calm_{\QQ,+}(\C^\infty).$$
We consequently obtain the following theorem.
\begin {theorem} $($\text{\bf{\abf for conical zeta values}}$)$ For the linear map
$$S^o:\QQ \dcc\to \calm_\Q(\C ^\infty),
$$
there exist unique linear maps $S^o_1: \QQ \dsmc \to \calm_{\Q,+}(\C^\infty)$ and
$S^o_2: \QQ \dsmc \to \Q+\calm _{\Q,-}(\C^\infty)$, with $S^o_1(\{0\},\{0\})=1$, $S^o_2(\{0\},\{0\})=1$, such that
 \begin{equation}
 S^o= (S^o_1)^{\ast (-1)}\ast S^o_2.
\mlabel{eq:abfdC}
\end{equation}
\mlabel {th:abfd}
\end{theorem}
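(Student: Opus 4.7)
The plan is to deduce this theorem as an application of the generalised differential Algebraic Birkhoff Factorisation in Theorem~\ref{thm:abf}. For the source, I would use the differential cograded, coaugmented, connected coalgebra $(\QQ\dcc,\Delta,\e,u)$ together with the family of coderivations $\{\delta_i\}_{i\geq 1}$ of Eq.~(\ref{eq:delta}); these hypotheses are precisely what Theorem~\ref{thm:HopfOnDCones} establishes, indexed by $\Sigma=\mathbb{Z}_{>0}$. For the target, I would use the commutative unitary algebra $\calm_\Q(\C^\infty)$ equipped with the commuting partial derivations $\partial_i=\partial/\partial\e_i$, $i\geq 1$.

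Next I would supply the linear decomposition required by Theorem~\ref{thm:abf}. Proposition~\ref{prop:DecomF} yields $\calm_\Q(\C^\infty)=\calm_{\Q,+}(\C^\infty)\oplus\calm_{\Q,-}(\C^\infty)$, so set $A_1:=\calm_{\Q,+}(\C^\infty)$, $A_2:=\calm_{\Q,-}(\C^\infty)$ and take $P:=\pi_+$ as the associated projection. Since constants are holomorphic at zero, $1_A\in A_1$; moreover $A_1$ is a unitary subalgebra of $A$, so part~(c) of Theorem~\ref{thm:abf} will ultimately yield that $(S^o_1)^{\ast(-1)}$ takes values in $\calm_{\Q,+}(\C^\infty)$. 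Two compatibility conditions still need to be checked: (i) each $\partial_i$ preserves both $A_1$ and $A_2$, and (ii) $S^o$ intertwines coderivations and derivations, namely $\partial_i\circ S^o=S^o\circ\delta_i$. Condition~(ii) is exactly Proposition~\ref{prop:partialS}. For condition~(i), the $A_1$-case is immediate since holomorphic germs at zero are stable under partial differentiation; the $A_2$-case requires extracting from \cite{GPZ3} that a germ whose poles are linear along prescribed rational forms remains, after partial differentiation, a germ whose poles are linear along the same forms.

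With all hypotheses in place, applying Theorem~\ref{thm:abf}(a) produces linear maps $S^o_1,S^o_2\in\mathcal{G}(\QQ\dcc,\calm_\Q(\C^\infty))$ defined recursively by Eq.~(\ref{eq:phiP}), satisfying $S^o_j(\ker\e)\subseteq A_j$ for $j=1,2$, the normalisation $S^o_j(((\{0\},\{0\});\vec 0))=1$, and the factorisation $S^o=(S^o_1)^{\ast(-1)}\ast S^o_2$. Uniqueness is Theorem~\ref{thm:abf}(b), and part~(c) confirms that $(S^o_1)^{\ast(-1)}$ has values in $\calm_{\Q,+}(\C^\infty)$, establishing Eq.~(\ref{eq:abfdC}). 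The main technical point to nail down is the stability of $\calm_{\Q,-}(\C^\infty)$ under the partial derivatives $\partial_i$, which is not a formal consequence of the abstract Birkhoff setup but a genuine property of the specific splitting of multivariate meromorphic germs with linear poles constructed in \cite{GPZ3}; once this is invoked, the rest of the argument is a direct specialisation of the general factorisation theorem.
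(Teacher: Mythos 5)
Your proposal is correct and follows essentially the same route as the paper: the paper likewise obtains the theorem by specialising Theorem~\ref{thm:abf} to $\coalg=\QQ\dcc$ with $A=\calm_\Q(\C^\infty)$, $A_1=\calm_{\Q,+}(\C^\infty)$, $A_2=\calm_{\Q,-}(\C^\infty)$ and $P=\pi_+$ coming from Proposition~\ref{prop:DecomF}, noting that $\calm_{\Q,+}(\C^\infty)$ is a unitary subalgebra. You are in fact slightly more explicit than the paper in flagging the stability of the two summands under the derivations $\partial_i$ and the intertwining property of Proposition~\ref{prop:partialS}, which the paper only invokes later for the differential refinement.
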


The same theorem applies to the sub-coalgebra $\QQ \cc$, which yields  a factorisation of $S^o:\Q\cc\to \calm_{\Q}(\C^\infty)$, giving rise to two linear maps $ S^o_1: \QQ \cc \to \calm_{\QQ,+}(\C ^\infty)$ and
$S^o_2: \QQ \cc \to \QQ+\calm _{\QQ,-}(\C ^\infty)$. We can legitimately use the same notation   as in Theorem \mref {th:abfd} since   they  correspond to the restriction of the linear maps in Theorem \mref {th:abfd} as a result of the uniqueness of the factorisation.

In \cite {GPZ4}, we identify $S^o_2$ with the exponential integral and give a formula for
$$\mu^o \lC:=(S^o_1)^{*(-1)}\lC$$
as follows.
\begin{prop} As a linear map on $\QQ \cc$, we have
$$S^o_2=I,
$$
  $$\mu^o =\pi_+ \,S^o .
  $$
\mlabel{prop:mu}
\end{prop}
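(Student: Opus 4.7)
The plan is to derive both statements in tandem from the Euler--Maclaurin formula for lattice cones proved in \cite{GPZ4} together with the uniqueness part of Theorem~\ref{thm:abf}. The key input is the identity
\[
S^o(\lC)\;=\;\sum_{F\preceq C}\mu^o\bigl(t(C,F),\Lambda_{t(C,F)}\bigr)\,I\bigl(F,\Lambda_F\bigr),
\]
which, once rewritten in terms of the coproduct $\Delta(\lC)=\sum_{F\preceq C}(t(C,F),\Lambda_{t(C,F)})\ot (F,\Lambda_F)$, is nothing other than the convolution identity
\[
S^o\;=\;\mu^o\,\ast\, I\qquad\text{on }\QQ\cc.
\]
This formula, established in \cite{GPZ4}, is the geometric backbone of the proof.

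Next, I would verify that $\mu^o$ and $I$ satisfy the range requirements appearing in Theorem~\ref{thm:abf}(\mref{it:uniq}). On the one hand, $I(\{0\},\{0\})=1$, and for $F$ of positive dimension, $I(F,\Lambda_F)(\vec\e)$ is (after simplicial subdivision) a sum of products of factors $-\langle v_i,\vec\e\rangle^{-1}$ with $v_i$ a primary generator, so $I(\ker\e)\subseteq\calm_{\QQ,-}(\C^\infty)$. On the other hand, $\mu^o=(S^o_1)^{\ast(-1)}$ with $S^o_1$ taking values in $\QQ+\calm_{\QQ,+}(\C^\infty)$; since $\calm_{\QQ,+}(\C^\infty)$ is a unitary subalgebra, Theorem~\ref{thm:abf}(\mref{it:phiinv}) ensures that $\mu^o$ also takes values in $\QQ+\calm_{\QQ,+}(\C^\infty)$. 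Combining these two facts, both $\mu^{o\,\ast(-1)}\ast I$ and $(S^o_1)^{\ast(-1)}\ast S^o_2$ are decompositions of $S^o$ that satisfy the hypotheses of the uniqueness statement in Theorem~\ref{thm:abf}(\mref{it:uniq}); hence $S^o_2=I$, which is the first claim.

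For the formula $\mu^o=\pi_+\,S^o$, I would apply $\pi_+$ to both sides of the Euler--Maclaurin identity. The term $F=C$ contributes $\mu^o(\{0\},\{0\})\,I(C,\Lambda_C)=I(C,\Lambda_C)$, which is killed by $\pi_+$ since it lies in $\calm_{\QQ,-}(\C^\infty)$. The term $F=\{0\}$ contributes $\mu^o(\lC)\cdot 1=\mu^o(\lC)$, and $\pi_+$ leaves this fixed since $\mu^o(\lC)\in\QQ+\calm_{\QQ,+}(\C^\infty)$. For each intermediate face $F$, the factor $\mu^o(t(C,F),\Lambda_{t(C,F)})$ is a holomorphic germ in the variables dual to $\lin(F)^{\perp}$, while $I(F,\Lambda_F)$ is a product of factors $\langle v_i,\vec\e\rangle^{-1}$ with $v_i\in\lin(F)$. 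Because the two sets of linear forms are orthogonal with respect to $Q$, the product lies in $\calm_{\QQ,-}(\C^\infty)$ by the orthogonality property of the decomposition of Proposition~\ref{prop:DecomF} established in \cite{GPZ3}, and is thus annihilated by $\pi_+$.

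The main obstacle in this plan is the compatibility of the splitting $\calm_\QQ(\C^\infty)=\calm_{\QQ,+}(\C^\infty)\oplus\calm_{\QQ,-}(\C^\infty)$ with the orthogonal decomposition $V^\circledast=\lin(F)^{\perp\ast}\oplus\lin(F)^{\ast}$ used to construct the transverse cone; namely, that the product of a holomorphic germ in variables orthogonal to $\lin(F)$ with a polar germ whose poles lie along hyperplanes inside $\lin(F)$ remains in $\calm_{\QQ,-}(\C^\infty)$. This is precisely where the inner product $Q$ and the definition of $\pi_+$ via orthogonal projections of poles interact, and where one must invoke the key result from \cite{GPZ3}. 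Once this orthogonality principle is in hand, both equalities $S^o_2=I$ and $\mu^o=\pi_+\,S^o$ follow by the straightforward bookkeeping sketched above.
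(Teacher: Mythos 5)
The paper offers no proof of Proposition~\ref{prop:mu} at all: it simply defers to \cite{GPZ4}, where $S^o_2$ is identified with the exponential integral. Your reconstruction is correct and is essentially the argument of \cite{GPZ4} that the citation stands in for: the Euler--Maclaurin convolution identity $S^o=\mu^o\ast I$, the verification that $I(\ker\e)\subseteq\calm_{\QQ,-}(\C^\infty)$ and that $\mu^o$ lands in $\QQ+\calm_{\QQ,+}(\C^\infty)$, the uniqueness clause of Theorem~\ref{thm:abf}(\ref{it:uniq}) to force $S^o_2=I$, and the term-by-term application of $\pi_+$ to the convolution (with the cross terms killed because a holomorphic germ in directions $Q$-orthogonal to $\lin(F)$ times a pure fraction with poles along forms $\langle v_i,\cdot\rangle$, $v_i\in\lin(F)$, is by definition a polar germ in the decomposition of Proposition~\ref{prop:DecomF}). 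You correctly isolate that orthogonality compatibility as the crux. One caveat worth making explicit: since this paper \emph{defines} $\mu^o:=(S^o_1)^{\ast(-1)}$, the identity $S^o=\mu^o\ast I$ you take as input is, with that definition, literally equivalent to the first claim $S^o_2=I$; to avoid circularity you must take $\mu^o$ in the Euler--Maclaurin formula of \cite{GPZ4} to be the independently defined interpolator (e.g.\ given recursively by $\mu^o\lC=\pi_+\bigl(S^o\lC-\sum\mu^o(t(C,F),\Lambda_{t(C,F)})\,I(F,\Lambda_F)\bigr)$ over proper nonzero faces), and only then use the range conditions and uniqueness to conclude that this decomposition coincides with the \abf one. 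With that reading your argument is complete.
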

Here $I$ is the exponential integral on lattice cones \cite {GPZ4} defined as follows on simplicial cones and then extended to any cone by the subdivision property. If $v_1, \cdots v_k \in \Lambda _C$ is a set of primary generators of a simplicial cone $C$, and $u_1, \cdots, u_k$  a basis of $\Lambda_C$, for $1\leq i\leq k$, let $v_i=\sum\limits_{j=1}^k a_{ji}u_j, a_{ji}\in \ZZ $. Define linear functions
$L_i:=L_{v_i}:=\sum\limits_{j=1}^k a_{ji}\langle u_j, \vec \e\rangle$ and let $w\lC $ denote the absolute value of the determinant of the matrix $[a_{ij}]$, then
\begin{equation}
I \lC (\vec \e ): =(-1)^k\frac
{w\lC}{L_1\cdots L_k}.
\mlabel{eq:ExpI}
\end{equation}


In general we also have

\begin{prop}
\mlabel{prop:DecFactorO}
For $(\lC;\vec s\,)\in \QQ \dcc $, we have
\begin {equation}
S^o_1(\lC;\vec s\,)=  \partial ^{-\vec s\,}S^o_1  \lC,\quad  \quad S^o_2(\lC;\vec s\,)=  \partial ^{-\vec s\,}S^o_2  \lC
\mlabel{eq:DecFactorO}
\end{equation}
and
\begin {equation}\mu^o =\pi_+ \,S^o .
\end{equation}
\mlabel{thm:DecFactor}
\end{prop}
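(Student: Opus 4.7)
The plan is to apply Theorem~\ref{thm:Birkhoff}(a) to the map $S^o$ and then bootstrap Proposition~\ref{prop:mu} from $\QQ\cc$ to $\QQ\dcc$ via a differential compatibility argument for $\mu^o$. First I would verify the hypotheses of Theorem~\ref{thm:Birkhoff} in our setting: each partial derivative $\partial_i$ preserves both $\calm_{\QQ,+}(\C^\infty)$ (holomorphicity at zero is stable under differentiation) and $\calm_{\QQ,-}(\C^\infty)$ (differentiating a sum of terms of the form $\prod_j L_j^{-n_j}\cdot h$ with $h$ holomorphic only raises the multiplicities of the same linear forms in the denominator), so $\pi_+$ commutes with every $\partial_i$. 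Combined with Proposition~\ref{prop:partialS}, which gives $\partial_i S^o = S^o \delta_i$, Theorem~\ref{thm:Birkhoff}(a) immediately yields $\partial_i S^o_j = S^o_j \delta_i$ for $j=1,2$; iterating these and invoking Lemma~\ref{lem:diffc} gives the first displayed identity $S^o_j(\lC;\vec s\,)= \partial^{-\vec s\,}S^o_j\lC$.

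For the second identity $\mu^o = \pi_+ S^o$ on $\QQ\dcc$, the base case on $\QQ\cc$ is Proposition~\ref{prop:mu}. The substantive step, which I expect to be the main work of the argument, is showing that $\mu^o = (S^o_1)^{*(-1)}$ itself intertwines the derivation and coderivation, namely $\partial_i \mu^o = \mu^o \delta_i$. A short Leibniz-type calculation gives, for any two linear maps $f, g\colon \QQ\dcc \to \calm_\QQ(\C^\infty)$, the identities
\begin{equation*}
\partial_i (f*g) = (\partial_i f)*g + f*(\partial_i g), \qquad (f*g)\,\delta_i = (f\delta_i)*g + f*(g\delta_i),
\end{equation*}
the first because $\partial_i$ is a derivation on the target algebra, the second from $\Delta\delta_i = (\delta_i\otimes\id + \id\otimes\delta_i)\Delta$. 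Applying both to $S^o_1 * \mu^o = u_A\,\e$ (which is itself trivially differential: $\partial_i 1_A = 0$ on one side and $\e\delta_i = 0$ by Lemma~\ref{lem:coco} on the other), together with $\partial_i S^o_1 = S^o_1 \delta_i$ from the first step, yields $S^o_1 * (\partial_i \mu^o - \mu^o \delta_i) = 0$; left-convolving by $\mu^o$ gives $\partial_i \mu^o = \mu^o \delta_i$.

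Once $\mu^o$ is known to be differential, the conclusion follows by a three-line computation: for $(\lC;\vec s\,)\in \dcc$,
\begin{equation*}
\mu^o(\lC;\vec s\,) = \partial^{-\vec s\,}\mu^o\lC = \partial^{-\vec s\,}\pi_+ S^o\lC = \pi_+\,\partial^{-\vec s\,} S^o\lC = \pi_+ S^o(\lC;\vec s\,),
\end{equation*}
using Proposition~\ref{prop:mu}, the commutation $\pi_+\partial_i = \partial_i\pi_+$ from the first step, and Proposition~\ref{prop:partialS} once more. The only nontrivial piece is the convolution bookkeeping in the second paragraph; everything else is an assembly of results already established, with Lemma~\ref{lem:diffc} doing the work of reducing the coloured case to the uncoloured one.
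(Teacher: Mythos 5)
Your proposal is correct and follows essentially the same route as the paper, whose proof is a one-line appeal to Proposition~\ref{prop:partialS} (the intertwining $\partial_i S^o = S^o\delta_i$) together with the differential Algebraic Birkhoff Factorisation of Theorem~\ref{thm:Birkhoff}. The only difference is one of detail: you explicitly verify the hypothesis that $\partial_i$ preserves the two summands $\calm_{\QQ,\pm}(\C^\infty)$ and you spell out, via the convolution Leibniz identities, why $(S^o_1)^{\ast(-1)}$ inherits the intertwining property so that $\mu^o=\pi_+S^o$ extends from $\QQ\cc$ to $\QQ\dcc$ --- steps the paper leaves implicit but which are exactly what its cited theorem requires.
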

\begin{proof}
By Proposition~\mref{prop:partialS} , $S^o$ are compatible with the coderivations on $\QQ\dcc$ and derivations on $\calm_\QQ(\C ^\infty)$. The conclusion then follows from Theorem~\mref{thm:abf}.
\end{proof}

For   $(\lC;\vec s\,)\in \dsmc$ the expressions $\mu^o (\lC;\vec s\,)=(S^o_1)^{*(-1)}(\lC;\vec s\,)$ in the \abf of $S^o$ is a germ of holomorphic functions which we  can  therefore  evaluate   at $ 0$.

\begin {defn} The value $$\zeta^o (\lC;\vec s\,):=(S^o_1)^{*(-1)}(\lC;\vec s\,)(0)$$
is called the {\bf renormalised open conical zeta value} of $(\lC;\vec s\,)$.
\end {defn}

In particular, this definition applies to cones in $\cch$ and $\dch$.

\begin{cor} The germs of functions $(S^o_1)^{*(-1)}\lC$ are generating functions of renormalised open  conical zeta values at nonpositive integers.  More precisely, for a   \ltcone \  $\lC\in \cc$,  we have
\begin{equation}\mlabel{eq:Taylorkdimc} (S^o_1)^{*(-1)} \lC(\vec \e)=\sum_{\vec r\in \Z_{\geq 0}^k}^\infty \zeta^o(\lC;-\vec r\,) \frac{\vec \e^{\,\vec r}}{\vec r!}.\end{equation}
\mlabel{pp:diffrenzeta}
 \end{cor}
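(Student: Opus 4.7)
The plan is to Taylor-expand the holomorphic germ $\mu^o\lC := (S^o_1)^{\ast(-1)}\lC$ at $\vec\e = 0$ and identify its Taylor coefficients with the renormalised values $\zeta^o(\lC; -\vec r\,)$. By construction, $\mu^o\lC$ lies in $\calm_{\QQ,+}(\C^\infty)$ and so admits a convergent multivariable Taylor expansion
$$\mu^o\lC(\vec\e) = \sum_{\vec r\in\Z_{\geq 0}^k} \frac{\vec\e^{\,\vec r}}{\vec r!}\,\bigl(\partial^{\vec r}\mu^o\lC\bigr)(0),$$
so the task reduces to showing $\bigl(\partial^{\vec r}\mu^o\lC\bigr)(0) = \zeta^o(\lC;-\vec r\,)$ for every $\vec r\in\Z_{\geq 0}^k$.

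The key step is to establish the differential compatibility $\partial_i\,\mu^o = \mu^o\,\delta_i$ on all of $\QQ\dcc$. I would combine Propositions \ref{prop:mu} and \ref{thm:DecFactor}, which give $\mu^o = \pi_+\circ S^o$, with Proposition \ref{prop:partialS}, which gives $\partial_i\circ S^o = S^o\circ\delta_i$. What remains is the commutation $\pi_+\,\partial_i = \partial_i\,\pi_+$; this follows from the stability of both $\calm_{\QQ,+}(\C^\infty)$ and $\calm_{\QQ,-}(\C^\infty)$ under each $\partial_i$, a property already built into the \abf of Theorem \ref{th:abfd} via the hypotheses of Theorem \ref{thm:Birkhoff}.

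Iterating this compatibility and applying Lemma \ref{lem:diffc}, which rewrites $(\lC;-\vec r\,) = \delta_1^{r_1}\cdots\delta_k^{r_k}(\lC;\vec 0\,)$, yields
$$\partial^{\vec r}\,\mu^o\lC = \mu^o(\lC;-\vec r\,).$$
Evaluating at $\vec\e = 0$ and invoking the definition $\zeta^o(\lC;-\vec r\,) := \mu^o(\lC;-\vec r\,)(0)$ identifies the Taylor coefficients, which substituted into the Taylor expansion produces \eqref{eq:Taylorkdimc}. The only point requiring a small verification beyond direct bookkeeping is the commutation $\pi_+\,\partial_i = \partial_i\,\pi_+$; every other ingredient is already in hand. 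No genuine obstacle is expected, since holomorphy of $\mu^o\lC$ is guaranteed by Theorem \ref{th:abfd} and the differential structure of $\QQ\dcc$ has been set up precisely to make the compatibility work.
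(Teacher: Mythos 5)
Your argument is correct and is essentially the paper's own: the paper likewise identifies the Taylor coefficients of the holomorphic germ $(S^o_1)^{*(-1)}\lC$ at $\vec\e=0$ by using the derivative compatibility of Eq.~(\ref{eq:DecFactorO}) to write $\partial^{\vec r}(S^o_1)^{*(-1)}\lC(0)=(S^o_1)^{*(-1)}(\lC;-\vec r\,)(0)=\zeta^o(\lC;-\vec r\,)$. Your extra justification of that compatibility via $\mu^o=\pi_+\,S^o$, Proposition \ref{prop:partialS}, and the commutation $\pi_+\partial_i=\partial_i\pi_+$ (which indeed holds by the $\partial_i$-stability of the two summands) is just a spelled-out version of what Proposition \ref{prop:DecFactorO} already packages.
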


\begin {proof} By Eq.~(\mref{eq:DecFactorO}), we have
$$\partial_{\vec \e}^{\vec r\,}(S^o_1)^{*(-1)}\lC(0)=(S^o_1)^{*(-1)}(\lC;-\vec r\,)(0)=\zeta^o(\lC;-\vec r\,),$$
as needed.
\end{proof}

\section {Comparison of the two renormalisation schemes}
So far, we have two approaches to renormalise sums on Chen cones, which can be related by means of a restriction  $\vec \e =\vec a\, \e$ along a direction $\vec a$: the first one by which the \abf procedure is implemented after restricting, the second one by which the \abf procedure is implemented before  restricting.

 Under the restriction along a direction $\vec a$, the splittings of the target space in the two approaches differ as it can be seen on the following counterexample which shows that evaluation  ${\mathcal E}_{\vec a}$ along a given direction $\vec a\, \e$ does not commute with the projection $\pi_+$:
$$\pi_+\circ {\mathcal E}_{\vec a}\neq {\mathcal E}_{\vec a}\circ \pi_+,$$
where the projection $\pi_+$ on the left hand side is the one on  $\calm_\QQ(\C ^\infty)$ and the one on the right hand side is on $\calm_\QQ(\C )$.

\begin{coex}  Let   $f(\e_1,\e_2):=\frac{\e_1}{ \e_2}$, then
$$\pi_+\circ {\mathcal E}_{\vec a}(f)= \frac{a_1}{a_2}\neq 0= {\mathcal E}_{\vec a}\circ \pi_+(f).$$
\end{coex}

 But surprisingly, these two renormalisation procedures give the same renormalised values for Chen cones.

\begin {prop}\label{prop:comparison} For Chen cones, the factorisations obtained by
\begin{itemize}
\item first implementing the \abf on the exponential sum $S^o$ and then restricting along a direction $\vec a \e $, and
\item first restricting the exponential sum $S^o$ along a direction $\vec a \e$  and then implementing the \abf
\end{itemize}
coincide.
\end{prop}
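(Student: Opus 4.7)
The plan is to leverage the uniqueness of the univariate \abf\ to identify the two factorizations, rather than attempt a direct proof that $\pi_+$ commutes with $\mathcal{E}_{\vec a}$ (which fails, by the counterexample immediately preceding the proposition). Write $S^o=(S_1^o)^{\ast(-1)}\ast S_2^o$ for the multivariate factorization of Theorem~\mref{th:abfd} restricted along the inclusion $\QQ\dch\hookrightarrow \QQ\dcc$, and $\phi=\phi_-^{\ast(-1)}\ast\phi_+$ for the univariate factorization of Section~\mref{sec:ChenCone}. The target equalities are $\mathcal{E}_{\vec a}(S_1^o)=\phi_-$ and $\mathcal{E}_{\vec a}(S_2^o)=\phi_+$.

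First, $\mathcal{E}_{\vec a}\colon f(\vec\e)\mapsto f(\vec a\,\e)$ is an algebra homomorphism from the subring of $\calm_\QQ(\C^\infty)$ whose linear poles $L$ satisfy $L(\vec a)\neq 0$ into $\C[\e^{-1},\e]]$. Since $\alpha=\sum a_i e_i^*$ with $a_i<a_{i+1}<0$ is strictly negative on every non-zero vector of every cone in $\cch$, every linear form $L=\langle v,\vec\e\rangle$ arising in the denominators of $S^o(\lC;\vec s)$, $S_1^o(\lC;\vec s)$ and $S_2^o(\lC;\vec s)$ for $(\lC;\vec s)\in\dch$ satisfies $L(\vec a)=\alpha(v)\neq 0$. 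Combined with the tautological identity $\mathcal{E}_{\vec a}(S^o)=\phi$ from comparing~(\mref{eq:RegPhi}) and~(\mref{eq:Somultiple}), applying $\mathcal{E}_{\vec a}$ to the multivariate factorization yields
$$\phi=\mathcal{E}_{\vec a}(S_1^o)^{\ast(-1)}\ast\mathcal{E}_{\vec a}(S_2^o).$$
By the uniqueness clause Theorem~\mref{thm:abf}(\mref{it:uniq}) applied to the univariate decomposition $\C[\e^{-1},\e]]=\C[[\e]]\oplus\e^{-1}\C[\e^{-1}]$, it now suffices to verify the two splitting conditions $\mathcal{E}_{\vec a}(S_1^o)(\ker\e)\subseteq\C[[\e]]$ and $\mathcal{E}_{\vec a}(S_2^o)(\ker\e)\subseteq\e^{-1}\C[\e^{-1}]$.

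The holomorphic condition is immediate since $S_1^o$ takes values in $\calm_{\QQ,+}(\C^\infty)$, whose restriction to the analytic curve $\vec\e=\vec a\,\e$ is holomorphic at $0\in\C$. The polar condition is the key step, and is where the Chen-cone assumption really matters. By Proposition~\mref{prop:mu} and Proposition~\mref{prop:DecFactorO}, for $(\lC;\vec s)\in\dch$ we have $S_2^o(\lC;\vec s)=\partial^{-\vec s}I(\lC)$; since every cone in $\cch$ is simplicial, the formula~(\mref{eq:ExpI}) applies directly and gives $I(\lC)=(-1)^k w(\lC)/(L_1\cdots L_k)$ with $L_i=\langle v_i,\vec\e\rangle$ coming from primary generators $v_i\in\lC\in\cch$, so that $L_i(\vec a)=\alpha(v_i)<0$. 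Iterated derivatives $\partial^{-\vec s}$ preserve the form \emph{polynomial divided by a product of positive powers of these same $L_i$'s}, so restriction along $\vec\e=\vec a\,\e$ collapses every denominator to a non-zero scalar multiple of $\e^{k+|\vec s|}$, placing $\mathcal{E}_{\vec a}(S_2^o)(\lC;\vec s)$ in $\e^{-(k+|\vec s|)}\C\subseteq\e^{-1}\C[\e^{-1}]$ for every non-trivial $(\lC;\vec s)$. Uniqueness then forces the desired equalities and hence the proposition.

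The main obstacle is exactly this polar condition: for a general germ in $\calm_{\QQ,-}(\C^\infty)$, evaluation at $\vec a\,\e$ can introduce a holomorphic part (as $\e_1/\e_2\mapsto a_1/a_2$ demonstrates), so the argument cannot be made abstractly. The rescue is the rigid rational form of the exponential integral $I$ on simplicial cones combined with the uniform sign control of $\alpha$ on $\cch$; these are the features specific to the family $\dch$ on which the proposition rests, and are also the reason the claim is restricted to lattice Chen cones, their faces and their transverse lattice cones.
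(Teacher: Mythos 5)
Your proposal is correct and follows essentially the same route as the paper: restrict the multivariate factorisation $S^o=(S_1^o)^{\ast(-1)}\ast S_2^o=\mu^o\ast I$ along $\vec\e=\vec a\,\e$, check that the two restricted factors land in $\C[[\e]]$ and $\QQ+\e^{-1}\QQ[\e^{-1}]$ respectively (using that $S_1^o$ is holomorphic, that $S_2^o=I$ is a sum of simple fractions whose linear denominators do not vanish at $\vec a$ because $\alpha$ is negative on all cones in $\cch$), and conclude by uniqueness of the univariate factorisation. Your homogeneity argument for the polar condition is just a more explicit version of the paper's remark that the restricted simplicial fractions lie in $\QQ[\e^{-1}]\e^{-1}$.
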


\begin {proof} We first investigate the first renormalisation procedure.  Since  the \abf applied to  the exponential sum $S^o$ on cones  boils down to the Euler-Maclaurin formula on cones \cite {GPZ4}, we have that on $\QQ \cc$
\begin{equation}\label{eq:factSo} S^o=\mu ^o\ast I,
\end{equation}
where $\ast$ is the convolution associated with the coproduct on lattice cones. For any lattice cone $\lC$ ,     $\mu^o \lC $ is holomorphic and $I\lC$ is  a sum of simple fractions. By Proposition \mref{prop:DecFactorO}, differentiating  yields for any lattice cone $\lC$ and any $\vec s$, a holomorphic function $\mu^o(\lC; \vec s)$ and a sum $I(\lC; \vec s)$ of simplicial fractions.
Now, restricting  along the direction $\vec \e =\vec a\, \e$ yields for any lattice cone $\lC$ and $\vec s$,  a map $\mu^0 (\lC;\vec s)\vert_{\vec \e =\vec a\, \e} $ in $\QQ [[\e]]$. Furthermore,    the restriction $I (\lC;\vec s)\vert_{\vec \e =\vec a\, \e}$  lies in $ \QQ [\e ^{-1}]\e ^{-1}$ if $(\lC;\vec s)\not =((\{0\},\{0\}),\vec 0)$ as a sum of  restricted simplicial fractions. So if we let
$$\tilde {\mu }  (\lC;\vec s)(\e )=\mu ^o(\lC;\vec s)(\vec \e )\vert_{\vec \e =\vec a\, \e} ,
$$
and
$$\tilde {I}  (\lC;\vec s)(\e )=I(\lC;\vec s)(\vec \e )\vert_{\vec \e =\vec a\, \e} ,
$$
with  $\phi  (\lC;\vec s)(\e )=S^o(\lC;\vec s)(\vec \e )\vert_{\vec \e =\vec a\, \e}$ as in (\mref {eq:RegPhi}), we have
$$\phi =\tilde \mu \ast \tilde I,
$$
where  $\tilde \mu(\lC; \vec s)\in \QQ  [[\e]]$ and $\tilde I(\lC; \vec s)\in \QQ +\QQ  [\e ^{-1}]\e ^{-1}$.

The alternative renormalisation procedure is to implement  \abf   on the restricted map $\phi$, which yields  a factorisation
$$\phi =\phi^{\ast (-1)}_-\ast \phi _+,
$$
with $\phi^{\ast (-1)}_- (\lC;\vec s)\in \C [[\e]]$, and $ \phi _+ (\lC;\vec s)\in \C [\e ^{-1}]$.

Thus both factorisations are for linear maps between the same spaces. Now   the standard argument of the uniqueness of the \abf then shows that the two factorisations coincide.
\end {proof}

\begin {coro} The renormalised multiple zeta values do not depend on the parameters $a_1, a_2, \cdots $.
\end{coro}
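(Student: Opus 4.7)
The plan is to deduce the corollary as a direct consequence of Proposition~\ref{prop:comparison}. Recall that for Chen cones the renormalised multiple zeta value is defined (see Eq.~(\ref{eq:renzeta})) by
$$\zeta^{\rm ren}(\lC,\vec{s}) = \phi_-^{\ast(-1)}(\lC,\vec{s})(0),$$
where $\phi_-$ comes from the univariate \abf applied to the univariate regularisation $\phi(\lC,\vec{s})(\e) = S^o(\lC,\vec{s})(\vec{\e})|_{\vec{\e}=\vec{a}\e}$, which \emph{a priori} depends on the direction $\vec{a}$. On the other hand, the multivariate renormalised conical zeta value
$$\zeta^o(\lC,\vec{s}) = (S^o_1)^{\ast(-1)}(\lC,\vec{s})(0)$$
involves no choice of direction at all.

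First, I would invoke the uniqueness statement of Proposition~\ref{prop:comparison}, which identifies the two factorisations: restricting the multivariate factorisation along $\vec{\e}=\vec{a}\e$ yields the univariate factorisation of $\phi$. Concretely, this gives the identity of germs
$$\phi_-^{\ast(-1)}(\lC,\vec{s})(\e) = (S^o_1)^{\ast(-1)}(\lC,\vec{s})(\vec{a}\,\e),$$
valid for every Chen cone with colouring $\vec{s}$ and every admissible direction $\vec{a}$.

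Next, I would evaluate both sides at $\e=0$. Since $(S^o_1)^{\ast(-1)}(\lC,\vec{s})$ is a holomorphic germ at $\vec{\e}=0$ in $\calm_{\Q,+}(\C^\infty)$, substituting $\vec{\e}=\vec{a}\cdot 0 = 0$ gives its value at the origin, independently of $\vec{a}$. Therefore
$$\zeta^{\rm ren}(\lC,\vec{s}) = \phi_-^{\ast(-1)}(\lC,\vec{s})(0) = (S^o_1)^{\ast(-1)}(\lC,\vec{s})(0) = \zeta^o(\lC,\vec{s}),$$
and the right-hand side does not involve $\vec{a}$.

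There is really no obstacle here beyond being careful that Proposition~\ref{prop:comparison} identifies the two factorisations \emph{as germs} (not just as formal power series), so that evaluation at zero is legitimate; this is ensured by the fact that $(S^o_1)^{\ast(-1)}(\lC,\vec{s})\in\calm_{\Q,+}(\C^\infty)$ is holomorphic at the origin, and that the restriction map $\vec{\e}\mapsto\vec{a}\,\e$ sends $0$ to $0$.
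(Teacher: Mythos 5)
Your proposal is correct and follows exactly the route the paper intends: the corollary is stated as an immediate consequence of Proposition~\ref{prop:comparison}, whose uniqueness argument identifies $\phi_-^{\ast(-1)}(\lC,\vec{s})$ with the restriction $(S^o_1)^{\ast(-1)}(\lC,\vec{s})(\vec{a}\,\e)$, so that evaluation at $\e=0$ returns the direction-independent value $(S^o_1)^{\ast(-1)}(\lC,\vec{s})(0)=\zeta^o(\lC,\vec{s})$. Your added remark that this evaluation is legitimate because $(S^o_1)^{\ast(-1)}(\lC,\vec{s})$ is a holomorphic germ at the origin is the right point of care and is consistent with the worked example following the corollary, where the $a_i$-dependence cancels at $\e=0$.
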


Let us illustrate the two approaches on a simple example.
To simplify notations, for $k$ linear forms $L_1, \cdots, L_k$, we set

\begin{equation}
 \label{eq:notationL}[L_1, \cdots , L_k]:=\frac {e^{L_1}}{1-e^{L_1}}\frac {e^{L_1+L_2}}{1-e^{L_1+L_2}}\cdots \frac {e^{L_1+L_2+\cdots +L_k}}{1-e^{L_1+L_2+\cdots +L_k}}.
\end{equation}
and
\begin{equation}  \frac {e^{\e}}{1-e^{\e}} =-\frac{1}{\e}+h(\e).
\end{equation}
\begin{ex}
For  $k=2$ and the Chen cone $<e_1, e_1+e_2>$, we have
$$S^o(<e_1, e_1+e_2>, \Lambda_2)=[\e_1, \e_2],
$$
\begin{eqnarray}\label{eq:k2}\pi_+\left([\e_1, \e_2] \right)    &=&\pi_+\left(\Big(-\frac 1{\e_1 }+h(\e _1 )\Big)\Big(-\frac 1{\e_1+\e_2}+h(\e _1+\e _2)\Big)\right)\nonumber\\
&=& \pi_+\left(-\frac {h(\e _1+\e_2 )}{\e_1 }-\frac {h(\e _1 )}{\e_1+\e _2 }+h(\e _1 )h(\e _1+\e_2)\right)\nonumber\\
&=& -\frac {h(\e _1+\e_2 )-h(\e _2)}{\e_1}-\frac {h(\e _1 )-h\left(\frac {\e_1-\e_2}2\right)}{\e_1+\e _2 }+h(\e _1 )h(\e _1+\e_2 ).\notag
\end{eqnarray}
So
$$\pi_+\left([\e_1, \e_2] \right) \vert_{ (a_1\e, a_2\e)}=
 -\frac {h((a_1+a_2)\e   )-h(a_2\e )}{a_1\e}-\frac {h(a_1\e  )-h\left(\frac {(a_1-a_2)\e}2\right)}{(a_1+a_2)\e }+h(a_1\e  )h((a_1+a_2)\e).$$
 Evaluating at $\e=0$ yields
 $$\zeta(0,0)= -\frac{(a_1+a_2)-a_2}{a_1}h^\prime(0)-\frac{\frac{a_1+a_2}{2}}{a_1+a_2}\, h^\prime(0)+h(0)^2=   -\frac{3}{2}\, h^\prime(0)+h(0)^2= \frac 38.$$

 On the other hand, to use formula~(\ref {eq:renzeta}) to find $\phi_- ^{\ast (-1)}$ needs more involved computations.
We easily get
 $$\phi_-^{\ast (-1)}(<e_1>, \ZZ e_1)=h (a_1\e),
 $$
 and
 $$\phi_-^{\ast (-1)}(<e_1+e_2>, \ZZ (e_1+e_2))=h((a_1+a_2)\e).
 $$

 The reduced coproduct applied to the two dimension Chen cone reads
 $$\Delta^\prime (\langle e_1,e_1+e_2\rangle, \Lambda _2)= (\langle e_2\rangle, \ZZ e_2) \otimes (\langle e_1\rangle, \ZZ e_1) + (\langle e_1-e_2\rangle, \ZZ \frac {e_1-e_2}2) \otimes (\langle e_1+e_2\rangle, \ZZ (e_1+e_2)).$$
Thus
 \begin{eqnarray*}
 &&\phi_-(<e_1, e_1+e_2>, \Lambda_2)\\
 &=&-P\Big(\Big(-\frac 1{a_1\e }+h(a _1\e )\Big)\Big(-\frac 1{(a_1+a_2)\e }+h((a _1+a _2)\e)\Big)\\
 &&+\Big(-h(a_2\e)\Big)\,\Big(-\frac 1{a_1\e }+h(a _1\e )\Big) + \Big(-h((a_1-a_2)\e /2 )\Big)\,\Big(-\frac 1{(a_1+a_2)\e }+h((a _1+a _2)\e)\Big)\Big)\\
&=&\frac {h((a_1+a_2) \e)-h(a_2 \e)}{a_1\e }+\frac {h(a _1\e )-h((a_1-a_2)\e /2 )}{(a_1+a_2)\e}-h(a_1\e )h((a_1+a_2)\e)\\
&&+h(a_2\e)h (a_1\e)  +h((a_1-a_2)\e /2 )h((a _1+a _2)\e).
\end{eqnarray*}
Now by the equation
\begin{eqnarray*}
&&\phi_-(<e_1, e_1+e_2>, \Lambda_2)+\phi_-^{\ast (-1)}(<e_1, e_1+e_2>, \Lambda_2)\\
&+&\phi_-(<e_2>, \ZZ e_2)\phi_-^{\ast (-1)}(<e_1>, \ZZ e_1)+\phi_-(\langle e_1-e_2\rangle, \ZZ \frac {e_1-e_2}2)\phi_-^{\ast (-1)}(<e_1+e_2>, \ZZ (e_1+e_2))\\
&=&0,
\end{eqnarray*}
we have
$$\phi_-^{\ast (-1)}(<e_1, e_1+e_2>, \Lambda_2)=
 -\frac {h((a_1+a_2)\e   )-h(a_2\e )}{a_1\e}-\frac {h(a_1\e  )-h\left(\frac {(a_1-a_2)\e}2\right)}{(a_1+a_2)\e }+h(a_1\e  )h((a_1+a_2)\e).$$
This agrees with $\pi_+\left([\e_1, \e_2] \right) \vert_{ (a_1\e, a_2\e)}$.
\end{ex}
\smallskip

\noindent
{\bf Acknowledgements}:
This work is supported by the National Natural Science Foundation of China (Grant No. 11071176, 11221101 and 11371178) and the National Science Foundation of US (Grant No. DMS~1001855). The authors thank Kavli Institute for Theoretical Physics China (KITPC) and Morningside Center of Mathematics (MCM) in Beijing where part of the work was carried out.
The second author thanks Sichuan University, Lanzhou University and Capital Normal University for their kind hospitality.

\end{document}